\tikzset{
  no line/.style={draw=none,
    commutative diagrams/every label/.append style={/tikz/auto=false}},
  from/.style args={#1 to #2}{to path={(#1)--(#2)\tikztonodes}}}
\theoremstyle{definition}
\newtheorem{theorem}{Theorem}[subsection]
\newtheorem{lemma}[theorem]{Lemma}
\newtheorem{proposition}[theorem]{Proposition}
\newtheorem{algorithm}[theorem]{Algorithm}
\newtheorem{corollary}[theorem]{Corollary}
\newtheorem{definition}[theorem]{Definition}
\newtheorem{example}[theorem]{Example}
\newtheorem{remark}[theorem]{Remark}
\theoremstyle{plain}
\newcommand{\red}[1]{\textcolor{red}{#1}}
\DeclareTextCommand{\textprime}{\encodingdefault}{%
  \mbox{$\m@th'\kern-\scriptspace$}%
}
\newcommand{\s}{\textsf}  
\newcommand{\true}{\type{True}}
\newcommand{\List}[1]{\ensuremath{\s{List}[#1]}}
\newcommand{\emptytype}{\star}
\newcommand{\inputs}{\fun{inputs}}
\newcommand{\outputs}{\fun{outs}}
\newcommand{\Input}{\ensuremath{\s{Input}}}
\newcommand{\Output}{\type{Output}}
\newcommand{\OutputRef}{\fun{OutputRef}}
\newcommand{\outputref}{\fun{outputRef}}
\newcommand\N{\ensuremath{\mathbb{N}}}
\newcommand{\leteq}{\ensuremath{\mathrel{\mathop:}=}}
\newcommand{\Bool}{\type{Bool}}
\newcommand{\STRUC}{\type{STRUC}}
\newcommand{\LEDGER}{\type{LEDGER}}
\newcommand{\State}{\type{State}}
\newcommand{\Tx}{\type{Tx}}
\newcommand{\UTxO}{\type{UTxO}}
\newcommand{\True}{\type{True}}
\newcommand{\Ix}{\type{Ix}}
\newcommand{\Slot}{\type{Slot}}
\newcommand{\Env}{\type{Env}}
\newcommand{\TRANS}{\type{TRANS}}
\newcommand{\Trc}{\type{Trc}}
\newcommand{\blank}{
    \hspace{0.04cm}
    \rule{2.4mm}{.4pt}
    \hspace{0.04cm} }
\newcommand{\blankd}{
    \hspace{0.04cm}
    \rule{1.5mm}{.4pt}
    \hspace{0.04cm} }
\mathchardef\mhyphen="2D
\newcommand{\ByteString}{\type{ByteString}}
\newcommand{\Def}{\operatorname{Def}}
\newcommand{\GraphSieve}{\normalfont\textbf{Graph}^{\sharp}}
\newcommand{\SetCategory}{\normalfont\textbf{Set}}
\newcommand{\UMet}{\normalfont\textbf{UMet}}
\newcommand{\ListInf}[1]{\ensuremath{\s{List}_{\infty}[#1]}}
\newcommand{\comma}{,}
\renewcommand{\Im}{\operatorname{Im}}
\DeclareFontFamily{OMX}{MnSymbolE}{}
\DeclareSymbolFont{MnLargeSymbols}{OMX}{MnSymbolE}{m}{n}
\DeclareFontShape{OMX}{MnSymbolE}{m}{n}{
    <-6>  MnSymbolE5
   <6-7>  MnSymbolE6
   <7-8>  MnSymbolE7
   <8-9>  MnSymbolE8
   <9-10> MnSymbolE9
  <10-12> MnSymbolE10
  <12->   MnSymbolE12
}{}
\DeclareFontShape{OMX}{MnSymbolE}{b}{n}{
    <-6>  MnSymbolE-Bold5
   <6-7>  MnSymbolE-Bold6
   <7-8>  MnSymbolE-Bold7
   <8-9>  MnSymbolE-Bold8
   <9-10> MnSymbolE-Bold9
  <10-12> MnSymbolE-Bold10
  <12->   MnSymbolE-Bold12
}{}
\let\llangle\@undefined
\let\rrangle\@undefined
\DeclareMathDelimiter{\llangle}{\mathopen}%
                     {MnLargeSymbols}{'164}{MnLargeSymbols}{'164}
\DeclareMathDelimiter{\rrangle}{\mathclose}%
                     {MnLargeSymbols}{'171}{MnLargeSymbols}{'171}
\newcommand{\listsinf}[1]{\left\llangle #1 \right\rrangle}
\title{Properties of UTxO Ledgers and Programs Implemented on Them}
\author{Polina Vinogradova
\institute{Input Output Global}
\email{polina.vinogradova@iohk.io}
\and
Alexey Sorokin
\institute{Input Output Global}
\email{alex.sorokin@iohk.io}
}
\newcommand\titlerunning{Properties of UTxO Ledgers and Programs Implemented on Them}
\newcommand\authorrunning{P. Vinogradova, A. Sorokin}
\begin{document}
\maketitle

\begin{abstract}
    Trace-based properties are the gold standard for program behaviour analysis. One of the domains of application of this type of analysis is cryptocurrency ledgers, both for the purpose of analyzing the behaviour of the ledger itself, and any user-defined programs called by it, known as smart contracts. The (extended) UTxO ledger model is a kind of ledger model where all smart contract code is stateless, and additional work must be done to model stateful programs. We formalize the application of trace-based analysis to UTxO ledgers and contracts, expressing it in the languages of topology, as well as graph and category theory. To describe valid traces of UTxO ledger executions, and their relation to the behaviour of stateful programs implemented on the ledger, we define a category of simple graphs, infinite paths in which form an ultra-metric space. Maps in this category are arbitrary partial sieve-define homomorphisms of simple graphs. Programs implemented on the ledger correspond to non-expanding maps out of the graph of valid UTxO execution traces. We reason about safety properties in this framework, and prove properties of valid UTxO ledger traces. \newline

    \textbf{Keywords:} blockchain, ledger, smart contract, formal verification, specification, transition system, UTxO, properties, safety
\end{abstract}
    
\section{Introduction}
\label{sec:intro}

Cryptocurrency ledgers are distributed ledgers keeping records of
digital currency. When a user submits a transaction to
the network, each local record state is updated by applying the changes specified in the transaction.
Cryptocurrency ledger behavior is often described by as a deterministic
state transition systems \cite{tezos,ethereum,bitcoin,nervos,zil}. 
The main functionality supported by blockchain ledgers programs is a single atomic operation:
the application of a transaction to the given state, and all state updates can be decomposed into 
applications of single transactions. This makes small-step semantics 
well-suited to specify ledger behavior, as exemplified in both scientific research
and in industrial implementations \cite{structured,alonzo}. 

The two most common ledger models are \emph{account-based} (such as Ethereum \cite{ethereum}), and 
\emph{UTxO-based}. A minimal account-based ledger records the state of a collection of accounts, 
and applying transactions to the ledger updates the account states, and transfers funds between accounts. 
In this work, we focus on the UTxO ledger model, and its formalization in terms 
of small-step semantic \cite{structured}. 
A UTxO ledger, which stands for \emph{Unspent Transaction Outputs}, records entries (i.e. 
\emph{transaction outputs}) in a \emph{UTxO set}. 
This model was first introduced in the BitCoin system \cite{bitcoin},
and currently in use by the Cardano \cite{alonzo} and Ergo
\cite{ergo}. A UTxO entry contains some funds, 
together with a specification of what kinds of transactions are allowed to "spend" this record,
e.g. only ones signed by a particular key. Applying a transaction to a
UTxO set only adds of removes entries, never modifying them. This makes 
performing formal analysis on this model more tractable in many cases. In particular, 
the outcome of a transaction application is not affected by the order of transaction 
application, as we demonstrate in this work. 

An \emph{extended} UTxO (EUTxO) ledger is a UTxO-style ledger that supports the use of \emph{smart contracts}.
EUTxO smart contracts are pieces of user-defined code that specify permissions for
certain transaction actions. In this work, we omit most of the details of this
mechanism as it is not necessary for our construction.
However, we do make use of a stateful contract computation model
called a \emph{structured contract framework} (SCF) \cite{structured} that relies on the possibility of
composing sophisticated permissions. This framework specifies the condition 
that any valid transaction applied to the ledger state will result 
in an update for the program state encoded in that ledger state
that is in accordance with the program specification. This condition is used as the definition
of a a program being implemented on the ledger, including both consolidated and distributed programs.
This approach only formalizes a single "correct" step of a program, both in its ledger implementation 
and specification.

Trace-based properties \cite{liveness} are predicates on infinite sequences of program
states used to reason about the correctness of stateful programs. Unlike the structured contract 
formalism, this approach allows for reasoning about multiple program steps at a time, 
as it studies program executions of infinite length. 
As is, in order to demonstrate the correctness of the (single step) program state update,
the SC approach requires making certain assumptions, which  
are related to the behavior of the UTxO ledger program itself. These assumptions 
can only be expressed in terms of subsets of ledger traces generated in a specific way, i.e. 
a specific trace-based property. Moreover, the basic definitions of
the SCF formalism have not been used to to reason about traces or express any sophisticated guarantees 
of correct behavior. 

We make the observation that correctly-implemented programs, then, can be shown to behave correctly only when 
their behavior is considered on ledger traces that are themselves generated from a valid 
start state, and according to the ledger program specification.
In this work, we take a principled approach to formalizing this idea, while filling these gaps 
in analysis of the ledger program
and the implementations of other programs on it. In particular, we
use ideas from trace-based behavior analysis, graph theory, category theory, and topology, to 
do the following analysis:

\begin{itemize}
  \item[(i)] we define what it means for a ledger or contract trace to have the property of being \emph{valid},
  defined in terms of small-steps specifications and initial states;

  \item[(ii)] we define a category of simple graphs with fixed initial vertices,
  paths in which represent valid execution traces,
  and the maps are partial sieve-defined homomorphisms between them;

  \item[(iii)] we demonstrate that infinite paths in the graphs of this category form an ultrametric space \cite{ultrametric}.
  As a corollary, we obtain that any structured contract induces a non-expanding (and therefore continuous)
  map from ledger traces to structured contract traces;

  \item[(v)] we prove a number of classic UTxO ledger safety properties within our framework , including
  transaction commutativity for valid permutations, and replay protection.
\end{itemize}

The value in defining the category in (ii) is that it formalizes the relation of being "correctly implemented on" between 
\emph{a stateful program} and the \emph{ledger program which executes it} in a way that aligns with existing trace-based behavior 
analysis techniques. Formalizing the relation between properties of 
these programs allows us to derive useful results about ledger behavior based on the properties of implemented programs, 
and vice versa. In particular, a direct consequence of the nature of our construction is that a program property of the 
form "a specific bad thing never happens" (i.e., a safety property) is satisfied by all valid execution traces 
generated from a correct implementation of that program on the 
ledger. Finally, we note that the results in (v) are proved, instead of being treated as an assumption as is done in some existing 
work on implementing programs on a UTxO ledger in order to prove the correctness of its implementation \cite{msgs}. Since 
these trace properties cannot be expressed as predicates on an a specific state, this required the additional machinery 
we introduce here to formalize.

\section{Small-step specifications and the ledger model}
\label{sec:sts}

We present an overview of the types and semantics employed in our specifications,
on which we later base our analysis.
\subsection{Small-step specifications}

Small-step semantics are specified in terms of of atomic steps from which all other (composite) steps can be built. We establish the following notation for our upcoming specifications: 

\begin{definition}
  A \emph{small-step transition system} $\mathsf{TRANS}$ is given by a subset $\mathsf{TRANS} \subseteq \Env \times \State \times \Input \times \State$, 
  also denoted 
    $
    e \vdash
    \var{s} \trans{trans}{i} \var{s'} ~
  $
\end{definition}

Each component in a quadruple $(e,~s,~i,~s')$ has the following role: (i) $e \in \Env$ is an \emph{environment}; (ii) $s \in \State$ is a state to which an \emph{input} $i\in\Input$ is applied, or a \emph{starting state}; (iii) $s' \in \State$ is a state representing, for the given environment, a result of the input application to the starting state, or the \emph{end state}. It is possible that a given state has no valid transitions out of it, or that a given input is not included in valid transition 4-tuple.

Given a transition $(e,~s,~i,~s') \in \TRANS$, the pair $(e,~i)$ of an environment and an input make up a \emph{transition label} of $\TRANS.$ The difference between the input and the environment is in the source of the data. This convention comes from the specification of the Cardano ledger transition system \cite{shelley}. The user issues the input, e.g., a transaction. The environment is sourced from the transaction-containing blocks (which we do not model here). For example, the consensus and block production mechanisms keep track of the current time, which is specified by the environment in our model.

\subsection{Ledger transition system}

A ledger is a stateful program that implements one operation: the application of a transaction to the current ledger state. In the Extended UTxO (EUTxO) ledger model, the state is made up of a set of \emph{unspent transaction outputs}, called the UTxO set. A single UTxO entry (unspent output) consists of a unique identifier together with an output. The output contain data, assets, and code (i.e., a smart contract, which specifies what transactions have are permitted to remove this output from the UTxO set). UTxO entries are immutable, can can only be either added or removed from the UTxO set. 
We give an overview of the EUTxO state and transaction structure below, leaving abstract any types and definitions that are not relevant for the examples we present later. A complete specification is available in \cite{structured}. Nonstandard notation we use here is 
specified in Figure \ref{fig:notation:nonstandard}.

\begin{figure}[htb]
  \begin{align*}
    \emptytype
    & :~\{\emptytype\}
    & \text{denotes the one-element set, and its one inhabitant}
    \\
    Q
    & :~\type{Set}~(A)
    & \text{$Q$ is a set of elements of type $A$}
    \\
    \var{Key} \mapsto \var{Value}
    & \subsetneq [(\var{Key},~ \var{Value})]
    & \text{finite map}
  \end{align*}
  \caption{Non-standard map operators}
  \label{fig:notation:nonstandard}
\end{figure}

\textbf{UTxO set.} A \emph{UTxO set} constitutes the state of a ledger model. It is given by a finite map (finite associative array)
$ 
  (\ByteString, \N) \mapsto \Output
$
The key in the $\UTxO$ finite map is called an \emph{output reference}. 

\textbf{Slot number.} To represent blockchain time, a natural number is used, which we call a \emph{slot number} (or just \emph{slot}), with $\Slot = \N.$ A pair of slot numbers represents an interval which includes the first (starting) slot of the pair, but excludes the second (end) slot.

\textbf{Transaction.} Updates that a user wants to make to the UTxO set are specified in a $\Tx$ and $\Input$ data structures
\begin{align*}
  \Tx &=(\inputs: \type{Set}~{(\Input)}, & \Input &=( \outputref: (\ByteString, \N),\\
           & \ \outputs: \List{\Output}, & & \ \fun{output}: \Output, \\
           & \ \fun{validityInterval}: ({\Slot},{\Slot}), \\
           & \ \fun{additionalData} : \fun{AdditionalData})
\end{align*}

A transaction $\var{tx}$ consists of (i) a set $\inputs (\var{tx})$ of inputs (pointers to entries in the UTxO set), (ii) 
outputs $\outputs (\var{tx})$ (which will be added as values to the UTxO set, accompanied by appropriate unique identifier pointers), (iii) 
a pair of slots $\fun{validityInterval} (\var{tx})$ representing the interval of time during which this transaction can be 
processed.
We include the field $\fun{additionalData}$ in order to suggest how to extend this model to represent the full EUTxO ledger model \cite{structured}. Also, $\Tx$ is equipped with a hash function 
$
  h: \Tx \to \ByteString.  
$
that is used, for a given transaction $\var{tx}$, to generate the unique identifier $(\fun{hash}~\var{tx}, \var{ix})$. This identifier, called an output reference, is used when adding output $o_{ix} \in \outputs~\var{tx}$ with index $\var{ix}$ (in the list of outputs of $\var{tx}$) to the UTxO set. 

The following gives the type of the ledger transition system $\mathsf{LEDGER}
      \subseteq 
    \Slot \times \UTxO \times \Tx \times \UTxO
$
We require any member $(q,u,t,u')$ of $\LEDGER$ to satisfy the following constraints: (i) a transaction $t$ has at least one input; (ii) a slot $q$ is within the validity interval of transaction $t$; (iii) all transaction inputs exist in the UTxO set $u$. We define a function which checks these constraints,
$
  \fun{checkTx}:~\Slot \times \UTxO \times \Tx~\to~\Bool  
$ 
by the following rule
\[
  \fun{checkTx}~(q, u, t)~ \leteq (\inputs~\var{t}~\neq~\varnothing)  
  \wedge (q \in \fun{validityInterval}(\var{t}))  
\]
\[
  \wedge (\forall i \in \inputs~\var{t},~ (\outputref~i) \mapsto (\fun{output}~i)~\in~u) \wedge (\fun{additionalChecks}~(q, u, t))
\]
Again, the field $\fun{additionalChecks}$ is included in order to ensure the model can be extended to full EUTxO. The function $\fun{checkTx}$ specifies, for a given triple $(q,~u,~t)$, the conditions that must be satisfied in order for there to exist a $u'$ with $(q,u,t,u') \in \mathsf{LEDGER}.$ When it exists, $u'\in \UTxO$ is computed as follows: (i) UTxO entries in $u$ that correspond to the inputs of the transaction $t$ must be removed; (ii) entries constructed from the outputs of the transaction $t$ must be added to $u.$ We call the functions $\fun{getORefs}$ and $\fun{mkOuts}$ respectively (defined in Figure \ref{fig:utxo-func}). The following rule $\fun{ApplyTx}$ defines the membership of $(q,u,t,u')$ in the $\mathsf{LEDGER}$ relation:   
  \begin{equation}
    \label{fig:ledger-rule}
      \inference[ApplyTx]
      {
      u'~\leteq~(u \setminus \fun{getORefs}(t)) \cup \fun{mkOuts}(t)
      \\ ~ \\
      \fun{checkTx}~(q,u,t)
      \\ ~ \\
      }
      {
      \begin{array}{l}
        q \\
      \end{array}
        \vdash
        \begin{array}{r}
          u \\
        \end{array}
        \trans{ledger}{t}
        \begin{array}{r}
          u'  \\
        \end{array}
      }
  \end{equation}

  \begin{figure}[htb]
    \begin{align*}
      \fun{toMap}~ &: \Ix \to [\Output] \to (\Ix \mapsto \Output) \\
      \fun{toMap}~\var{ix}~\{\} &= [~] \\
      \fun{toMap}~\var{ix}~[u;~\var{outs}] &= \{~\var{ix}\mapsto u~\} \cup \{~(\fun{toMap}~(\var{ix}+1)~\var{outs})~\}
      \nextdef
      \fun{mkOuts}~ &: \Tx \to \UTxO \\
      \fun{mkOuts}~{tx} &= \{~(\var{tx},~\var{ix}) \mapsto o~ \mid~(\var{ix} \mapsto o)\in~\fun{toMap}~0~(\outputs~\var{tx})~\}
      \nextdef
      \fun{getORefs}~&~: \Tx \to \type{Set}~({\OutputRef}) \\
      \fun{getORefs}~{tx} &= \{~\outputref~i~\mid~i~\in~\inputs~\var{tx} ~\} 
    \end{align*}
    \caption{Auxiliary UTxO functions}
    \label{fig:utxo-func}
  \end{figure}  
\subsection{Structured contracts}
\label{sec:struc}

To motivate the category we define later, we specify of what it means for a \emph{stateful contract} to be implemented on the UTxO ledger. User-defined code in the EUTxO model is not stateful, and can only take the form of boolean predicates on transaction data. To reason about stateful contracts, we use the structured contract model \cite{structured}. Let $\STRUC$ be a program expressed in terms of the small-steps semantics. The state of this contract is represented (encoded) on the ledger in some specific way. This encoding is specified in terms of a (partial) projection function that computes the contract state for a given ledger state (or fails). Similarly, for a given transaction, the input to $\STRUC$ can be computed. The structured contract specification $\STRUC$ is said to be \emph{implemented correctly} whenever, for a given ledger state and transaction, the structured contract state (encoded within that ledger state) is updated by the input (encoded withing the transaction) in accordance with the $\STRUC$ specification. 

The specification $\STRUC$, together with the projection functions and a proof the the correctness of the implementation, constitutes a structured contract. 
Note here that we do not show the exact details of running user-defined code predicates on transaction data. In a full model, we would fill in these details by specifying the required $\fun{additionalData}$ and $\fun{additionalChecks}$. Depending on the choice of projection functions, this model can be used to represent both distributed and consolidated contracts \cite{structured}. We formalize,

\begin{definition}\label{def:structured}

  Suppose $\STRUC$ is small-step transition system 
$
    \STRUC
      \subseteq 
    \{*\} \times \State \times \Input \times \State
$
  and let we have a partial function $\pi : \UTxO \rightharpoonup \State$ and a (total) function \footnote{A total function $f$ from $X$ to $Y$ means everywhere-defined function on $X,$ while a partial function $f$ from $X$ to $Y$ means a total function from a subset $\Def(f)\subseteq X$ to $Y.$} $\kappa : \Tx \to \Input$ such that
    \begin{equation}
      \inference[$~$] 
      {
        \\~\\
        (u \in \Def\pi) ~~ \land ~~
        \left(
        {
          \begin{array}{c}
            q\\
          \end{array}
        }
        \vdash
        {
            \begin{array}{r}
              \var{u} \\
            \end{array}
        }
        \trans{ledger}{\var{t}}
        {
            \begin{array}{r}
              \var{u'} \\
            \end{array}
        }
        \right)
        \\~\\
      }
      {
        \\ ~ \\
        (u' \in \Def\pi)  ~~ \land ~~
        \left(
        {
          \begin{array}{c}
            *\\
          \end{array}
        }
        \vdash
        {
            \begin{array}{r}
              \pi\var{u} \\
            \end{array}
        }
        \trans{struc}{\kappa\var{t}}
        {
            \begin{array}{r}
              \pi\var{u'} \\
            \end{array}
        }
        \right)
        \\~\\
      }
    \end{equation}
  The triple $(\STRUC, \pi, \kappa)$ is called a \emph{structured contract} for $\LEDGER.$
\end{definition}

Following existing EUTxO design, no block-level data is exposed to a smart contract, such as a current slot number, so a singleton $\{*\}$ is the type of a structured contract specification environment. An example of a very basic program that can be defined as a structured contract would be an NFT, or a non-fungible (i.e. unique) token \cite{structured}. The state of an NFT contract is the total quantity of this type of token in the UTxO set, and the constraint in the update rule of that state is that the updated (end state) quantity may not exceed 1. 

As is, structured contracts give only the guarantee that each program step will be correct, both in the specification and in the ledger implementation. In the rest of this work, define the space of valid program executions constructed from its small-step specification in a way that aligns with existing definitions and terminology in the field. This allows for expressing (and proving adherence to) a wider range of properties, and formalizing the relationship between valid executions of ledger programs and executions of their implementations. 

\section{Traces}

In \cite{liveness}, the collection of \emph{traces} (or \emph{executions}) of a program with states of type $S$ is a collection of infinite lists of states. Taking $S = \UTxO$, we get a set of all UTxO traces, and taking $S = \State,$ we get the set of all contract state traces. In both cases we are dealing with \emph{arbitrary lists of states}. However, we are interested in reasoning about traces that are generated according to their specification. Therefore, we need to define appropriate subsets of infinite lists, which we refer to as \emph{valid traces}:

\begin{definition}[Valid ledger and structured contract traces] ~

  \begin{itemize}
    \item[(i)] Fix subsets 
    $
      \UTxO_0 \subseteq \UTxO, \Slot_0\subseteq\Slot
    $ 
    called \emph{valid initial UTxO states} and \emph{valid initial slots}, respectively. A \emph{set of valid $\LEDGER$ traces} is 
    \begin{align*}
      \Trc({\LEDGER})~ \leteq~& 
      \{ (u_0, u_1,\ldots) \in \ListInf{\UTxO}~\mid~ 
      \forall j \geq 0,~\exists~(q_j, u_j, t_j, u_{j+1}) \in\LEDGER:
      \\
      ~ & 
      (q_0,u_0) \in \Slot_0\times\UTxO_0,~ q_0\leq q_1\leq \ldots~\}
    \end{align*}
    Note that it is not enough to require that each 4-tuple is in $\LEDGER$, the slots must appear in the increasing order.

    \item[(ii)] Let $(\STRUC,~\pi,~\kappa)$ be a structured contract for $\LEDGER.$ We say that 
    $
      \State_0 \subseteq \State
    $ 
    is a set of \emph{valid initial $\STRUC$ states} if all valid initial UTxO states are mapped to $\State_0$ via the partially defined map $\pi,$ i.e. 
    $
      \UTxO_0 \subseteq \Def \pi$ and $\pi(\UTxO_0) \subseteq \State_0.
    $
    A \emph{set of valid $\STRUC$ traces} is 
    \[
      \Trc({\STRUC})~ \leteq~ \{~(s_0, s_1,\ldots) \in \ListInf{\State}~\mid~
      \forall j\geq 0, ~(*,~s_j,~i_j,~s_{j+1})~\in~\STRUC, ~s_0 \in \State_0~\}
    \]
\end{itemize}
\end{definition}

Since the environment type for structured contracts is $\{*\},$ there is no restriction on it. For both $\Trc({\LEDGER})$ and $\Trc({\STRUC})$ we refer to lists of quadruples corresponding to valid traces in $\LEDGER$ and $\STRUC,$ respectively, as \emph{lifts} of that traces.

\section{Graphs and sieve-defined homomorphisms}
\label{sec:graphs_sieves_hmms}

The specification $\LEDGER$ can be used to generate sequences of the form $(q,u,t_0,u_0),~(q,u_0,t_1,u_1),~...$, such that for each triple $(q,u_i,t_i), \fun{checkTx}~ (q,u_i,t) = \true$. This process gives rise to a \emph{simple graph}\footnote{By a simple graph we mean a directed graph in which for any two vertices $x$ and $y$ there is at most one edge $e:x\to y.$ Single loops at vertices are also allowed.} $\Lambda$, whose vertices are triples $(q,u,t)$ satisfying~$\fun{checkTx}$, and edges connecting any two $(q,u,t)$ and $(q',u',t')$ whenever $(q,u,t,u') \in \LEDGER$. Similarly, a structured contract specification $\STRUC$ defines another simple graph $\Gamma$ on triples $(*,s,i)$. We formalize the relation between simple graphs in general, and in particular, between graphs related via implementation, such as $\LEDGER$ and $\STRUC$. 

Note that in this section and onwards, we use standard terms and definitions of category theory, as described in existing works \cite{mac2013categories, borceux1994handbook, borceux1994handbook3}.
\subsection{Simple graphs and sieve-defined homomorphisms}

In general, a homomorphism of directed graphs is a pair of maps: one for vertices and one for edges, such that the map of edges respects their domains and codomains. In the case of a simple graphs, a map for vertices completely defines the homomorphism.

Let $G$ be the graph generated from $\LEDGER$, and $G'$ be the graph generated from $G'$.
The partially defined function $\pi: \UTxO \rightharpoonup \State$ and total function $\kappa:\Tx\to\Input$ define a partial map $\varphi$ from the set of vertices of $G$ to the set of vertices of $G',$ such that the domain of definition of $\varphi$, denoted $\Def \varphi$, is \emph{upward closed}. That is, for an edge $(q,u,t) \to (q',u',t')$ in $G$, and $(q,u,t)\in\Def \varphi$, then $(q',u',t')\in \Def \varphi$. This follows immediately from the definition of structured contract. Moreover, $\varphi$ defines a homomorphism of simple graphs from $\Def \varphi$ to $G'$:
$
    \left(
        (q,u,t) 
            \xrightarrow{~~}
        (q',u',t')
    \right)
    \mapsto
    \left(
        (*,\pi u,\kappa t) 
            \xrightarrow{~~}
        (*,\pi u',\kappa t')
    \right) .~
$

This can be made precise as the following definition,

\begin{definition}\label{def:Sieve}
    Let $G=(V,E)$ be a graph. A subset $S\subseteq V$ is called a \textit{sieve}, whenever any edge starting in $S$ necessary ends in $S$. That is, if $e:v_1\to v_2 \in E$ and $v_1\in S$, then $v_2 \in S$. Equivalently, every path in $G$ that starts in $S$ also ends in $S$.       
\end{definition}

The following are trivial examples of sieves:

\begin{example}
    For any graph $G=(V,E)$ the set of all vertices $V$ and any sink in $G$ are sieves. Also, any intersection of sieves is a sieve.
\end{example}

We can now define the class of graph homomorphisms that define the implementation relation between two specifications in graph-theoretic terms:

\begin{definition}\label{def:SieveDefHmm}
    Let $G=(V,E)$ and $G'=(V',E')$ be simple graphs. A partial map $\varphi:V\rightharpoonup~V',$ whose domain of definition $\Def \varphi$ is a sieve in $G,$ is called a \textit{partially sieve-defined homomorphism}, if $\varphi$ extends to a homomorphism of graphs from the full subgraph of $G$ on $\Def \varphi$ to the graph $G'.$ Such map will be denoted
    $
        \varphi: G \rightharpoonup G'    
    $ 
    and $\Def \varphi$ will also denote the corresponding full subgraph in $G$.
\end{definition}

All everywhere-defined homomorphisms of simple graphs are also partial sieve-defined homomorphism of simple graphs. In particular, the identity homomorphism is a trivial example of this. The following lemma is needed to define the composition of sieve-defined homomorphisms:

\begin{lemma}[Sieve of a sieve is a sieve]\label{def:CompositionOfSieves}
    Let $G=(V,E)$ be a graph and $S$ a sieve in $G.$ Suppose $S'$ is a sieve in a full subgraph $\overline{S}$ of $G$ defined by $S.$ Then $S'$ is a sieve in $G.$    
\end{lemma}

\begin{proof}
    Let $e:v \to v'$ be an edge in $G$ such that $v\in S'$. Since $S'\subseteq S$, then $v\in S$. Hence, $v'$ belongs to $S$, as $S$ is a sieve in $G$. Moreover, the edge $e$ is in the full subgraph $\overline{S}$ of $G$ on $S.$ By the assumption, $S'$ is a sieve in $\overline{S},$ so $v'$ belongs to $S'$.
\end{proof}

Recall the composition of partially defined functions between sets. Suppose 
$
    f:X \rightharpoonup Y 
    \text{~and~}
    g:Y \rightharpoonup Z
$
are partially defined functions. We define their composition to be the partially defined function 
$
    g\circ f: X \rightharpoonup Z, ~~
    \Def(g\circ f) = \Def f \cap f^{-1}(\Def g)
$
and $(g\circ f)x = g(f(x))$ if $x\in\Def(g\circ f).$

Since any partial sieve-defined homomorphism of simple graphs is completely defined by a map on its vertices, we define the composition of partial sieve-defined homomorphisms in a similar way to sets.

\begin{lemma}
    Partial sieve-defined homomorphisms of simple graphs are closed under composition.
\end{lemma}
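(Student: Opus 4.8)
The plan is to take two composable partial sieve-defined homomorphisms $\varphi : G \rightharpoonup G'$ and $\psi : G' \rightharpoonup G''$ and to verify that the composite $\psi \circ \varphi$, whose underlying vertex map is the set-level composition of partial functions recalled just above, again satisfies Definition~\ref{def:SieveDefHmm}. Concretely there are exactly two things to check: that $\Def(\psi\circ\varphi) = \Def\varphi \cap \varphi^{-1}(\Def\psi)$ is a sieve in $G$, and that $\psi\circ\varphi$ extends to a homomorphism of simple graphs from the full subgraph of $G$ on that set to $G''$.

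For the sieve condition I would first isolate a small auxiliary observation: the preimage of a sieve under any everywhere-defined homomorphism of simple graphs is again a sieve. Indeed, if $f : H \to H'$ is such a homomorphism, $T$ a sieve in $H'$, and $e : v_1 \to v_2$ an edge of $H$ with $v_1 \in f^{-1}(T)$, then $f$ sends $e$ to an edge $f(v_1) \to f(v_2)$ of $H'$ with $f(v_1) \in T$; since $T$ is a sieve, $f(v_2)\in T$, i.e.\ $v_2 \in f^{-1}(T)$. Applying this with $H = \Def\varphi$ (the full subgraph on which, by hypothesis, $\varphi$ restricts to a genuine homomorphism into $G'$) and $T = \Def\psi$ shows that $\Def\varphi \cap \varphi^{-1}(\Def\psi)$ is a sieve in the full subgraph $\Def\varphi$ of $G$. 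I would then invoke Lemma~\ref{def:CompositionOfSieves} with $S = \Def\varphi$ and $S' = \Def\varphi \cap \varphi^{-1}(\Def\psi)$ to promote this to a sieve in $G$ itself; this set is precisely $\Def(\psi\circ\varphi)$.

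For the homomorphism condition I would take an edge $v_1 \to v_2$ of the full subgraph of $G$ on $\Def(\psi\circ\varphi)$ and trace it through both maps. Since both endpoints lie in $\Def\varphi$, the homomorphism property of $\varphi$ yields an edge $\varphi(v_1) \to \varphi(v_2)$ in $G'$; since both endpoints also lie in $\varphi^{-1}(\Def\psi)$, this edge lives in the full subgraph on $\Def\psi$, so the homomorphism property of $\psi$ yields an edge $\psi(\varphi(v_1)) \to \psi(\varphi(v_2))$ in $G''$, which is exactly the image of the original edge under $\psi\circ\varphi$. As the graphs are simple, the vertex map determines the homomorphism, so this single edge-preservation check suffices.

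I expect the only genuinely load-bearing step to be the sieve condition, and within it the key move is recognizing that $\Def(\psi\circ\varphi)$ should first be established as a sieve \emph{inside} the subgraph $\Def\varphi$ (via the preimage observation) and only then lifted to a sieve in $G$ through Lemma~\ref{def:CompositionOfSieves}; the homomorphism condition is a routine diagram chase that the simplicity of the graphs renders essentially automatic.
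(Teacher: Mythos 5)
Your proposal is correct and follows essentially the same route as the paper: the paper's proof likewise shows $\Def(g\circ f)=\Def f\cap f^{-1}(\Def g)$ is a sieve in the subgraph $\Def f$ by exactly your preimage argument (unpacked inline rather than stated as a separate observation) and then invokes Lemma~\ref{def:CompositionOfSieves} to conclude it is a sieve in $G$. Your explicit verification of the edge-preservation condition is a minor addition the paper treats as immediate from simplicity; otherwise the two arguments coincide.
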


\vspace{-4mm}

\begin{proof}
    Given 
$
    f:G \rightharpoonup G' 
    \text{~and~}
    g:G' \rightharpoonup G''
$     
it is enough to prove that  $\Def(g\circ f) = \Def f \cap f^{-1}(\Def g)$
is a sieve in $\Def f$. Suppose $e:v\to v'$ is an edge in $\Def f$ such that $v\in\Def(g\circ f)$. The edge $f(e):f(v)\to~f(v')$ then belongs to $G',$ where $f(v)\in\Def g.$ Since $\Def g$ is a sieve, $f(v')$ also belongs to $\Def g.$ Therefore, $v'\in\Def f \cap f^{-1}(\Def g)$. We then get that $\Def(g\circ f)$ is a sieve in $\Def f.$
Since a sieve of a sieve is a sieve, $g\circ f:G \rightharpoonup G''$ is a partial sieve-defined homomorphism of simple graphs.
\end{proof}

\vspace{-2mm}

The above information motivates us to introduce a category $\GraphSieve$ of all simple graphs and partial sieve-defined homomorphisms.
\subsection{Graphs with distinguished initial vertices}

The definition of the set of valid traces for $\LEDGER$ and $\STRUC$ motivates us to add a subset of \textit{initial vertices} $\mathring{V}\subseteq V$ to each simple graph $G=(V,E).$ If $f:G\rightharpoonup G'$ is a partial sieve-defined homomorphism of simple graphs, then we require
$
    \mathring{V} \subseteq \Def f,
    ~
    f(\mathring{V}) \subseteq \mathring{V}',
$ 
i.e., each partial sieve-defined homomorphism should be defined over initial vertices and preserve them. We generalize this situation and define an appropriate category.

\begin{definition}
    Let $\GraphSieve_{*}$ be a category whose objects are pairs $(G;\mathring{V}),$ where $G=(V,E)$ is a simple graph. Let $\mathring{V}\subseteq V$ be a set of \textit{initial vertices}, so that the morphisms of $\GraphSieve_{*}$ from $(G;\mathring{V})$ to $(G';\mathring{V}')$, denoted $((G;\mathring{V}),(G',\mathring{V}'))^{\partial}$, are (partial) maps $f\in\GraphSieve(G,G')$ such that 
    $
        \mathring{V} \subseteq \Def f,
        f(\mathring{V}) \subseteq \mathring{V}'. ~
    $ 
    The composition and identities are inherited from ones in $\GraphSieve.$ The set of everywhere-defined maps $f:G\to~G'$ such that $f(\mathring{V})\subseteq \mathring{V}'$ we denote by $((G;\mathring{V}),(G';\mathring{V}'))$.
\end{definition}

If a set of initial vertices is clear from the context, we can abbreviate $(G;\mathring{V})$ to just $G,$ still considering it as an object of  $\GraphSieve_{*}.$ Obviously, there is an inclusion
$
    ((G;\mathring{V}),(G',\mathring{V}'))
        \subseteq
    ((G;\mathring{V}),(G',\mathring{V}'))^{\partial}.
$

Let us consider the following covariant representable functor, which we use in an upcoming example (where the above inclusion becomes an equality),
\[
    (\N,\blank): 
    \begin{cases}
        \begin{tikzcd}[column sep= small, row sep=0ex]
            \GraphSieve_{*} 
                \arrow[r]
                &  
            \SetCategory
                \\
            G 
                \arrow[r, mapsto] 
                & 
            (\N,G)
                \\
            (f:G \rightharpoonup G')
                \arrow[r, mapsto] 
                & 
            (f\circ\blank:(\N,G)\to(\N,G'))
        \end{tikzcd}
    \end{cases}
\]
We will use the following example to represent and study paths in $\GraphSieve_{*}$ as executions (or traces) programs:

\begin{example}
    \label{ex:paths}
The natural ordering on the set of natural numbers $\N$ induces a simple graph $0 \to 1 \to \ldots,~$ which will be also denoted by $\N.$ By default, the set of initial vertices for $\N$ is set to be $\{0\},$ i.e., 
$
    (\N;\{0\}) \in \GraphSieve_{*}.
$
Moreover, given any $f:(\N;\{0\})\rightharpoonup (G;\mathring{V}),$ we get that 
$
    0 \in \Def f
$
and 
$    
    \Def f
$
is a sieve in 
$    
    G.
$
Since there exists a unique path in $\N$ from $0$ to any $n>0,$ any $n\in\N$ also belongs to $N.$ Therefore, $\Def f = \N,$ i.e., any partial sieve-defined homomorphism from $(\N;\{0\})$ to $(G;\mathring{V})$ is everywhere defined:
$
    ((\N;\{0\}),(G;\mathring{V}))
        =
    ((\N;\{0\}),(G;\mathring{V}))^{\partial}
$
or, skipping the sets of initial vertices, $(\N,G) = (\N,G)^{\partial}.$
Any $f:\N \to G$ models an infinite path $v_0 \to v_1 \to \ldots$ in the simple graph $G,$ where $v_i = f(i)$ and $v_0\in\mathring{V}.$ Of course, $(\N,G) = \varnothing,$ if $G$ is acyclic and finite. 
\end{example}

\begin{remark}
    When $G = (V,E)$ is a complete graph (including single loops at vertices), we identity $(\N,G)$ with a set 
     $
         \listsinf{V} :=
         \left\{
             \vec{v} = (v_0,v_1,\ldots)
             ~|~
             v_i \in V
         \right\} ~
     $
         of infinite lists of elements of $V.$ These arbitrary infinite lists of elements of $V$ are exactly the execution traces of program with states in $V$ \cite{liveness}.
     \end{remark}
\subsection{Graphs of $\LEDGER$, $\STRUC$, and their traces}

Recall that applying the functor $(\N,\blank)$ to a simple graph $G$ returns the set of infinite paths in $G$.
We use Example \ref{ex:paths} to define the abstract notion of a \emph{trace of a map}, which corresponds to the set of valid traces (considered as infinite paths in a state transition graph) that can be generated using this functor.

\begin{definition}
    Let $f:(G;\mathring{V})\rightharpoonup(G';\mathring{V}')$ be a morphism in $\GraphSieve_{*}.$ An image $\Im {f_{*}}$ of a postcomposition function $f_{*} = f \circ \blank: (\N \comma G) \to (\N \comma G')$ is called a \textit{trace} of $f.$ We'll denote the trace of $f$ by~$\Trc(f).$
\end{definition}

Unfolding the above definition, we get that the trace of $f:(G;\mathring{V})\rightharpoonup(G';\mathring{V}')$ consists of infinite paths 
$
    v'_0
        \to
    v'_1
        \to
    v'_2
        \to
    \ldots        ~
$
in $G',$ where $v'_0\in \mathring{V}',$ which has lifts in $G,$ i.e., an infinite path 
$
    v_0
        \to
    v_1
        \to
    v_2
        \to
    \ldots~
$
in $G,$ where $v_0\in\mathring{V},$ with
$
    f(v_0) = v'_0,
    ~~    
    f(v_1) = v'_1,
    ~~
    f(v_2) = v'_2,
    ~~
    \ldots ~
$
etc.

The definition of a valid $\LEDGER$ trace can be restated in terms of infinite paths in graphs. Let $\Lambda = (V,E)$ be a graph whose  set of vertices $V$ consists of  triples $(q,u,t)$ satisfying the condition $\fun{checkTx}$, and that there is an edge 
$
    (q,u,t) \to (q',u',t')~    
$
if and only if $(q,u,t,u')\in\LEDGER$ and $q\leq q'.$ The constructed graph $\Lambda$ is simple, as $u'$ is uniquely defined. Moreover, we specify the set of initial vertices to be 
$
    \mathring{V}:=
    \left\{
        (q,u,t) \in \Slot_0 \times \UTxO_0 \times \Tx 
            \mid
        \fun{checkTx}(q,u,t)
    \right\} ~
$

Similarly, we define a simple graph $\Lambda'=(V',E'),$ where $V'$ is a set of all $u\in\UTxO$ such that there are $q\in\Slot$ and $t\in\Tx$ with 
$
    \fun{checkTx}(q,u,t) = \True~
$
and there is an edge $u\to u'$ if and only if there are $q\in\Slot$ and $t\in\Tx$ with
$
    (q,u,t,u')\in\LEDGER. ~    
$
Simply speaking, $\Lambda'$ is a ``projection'' of $\Lambda$ on its $\UTxO\mhyphen$component. For the set of initial vertices in $\Lambda'$ we take
$
    \mathring{V}':= V' \cap \UTxO_0.~
$
The projection map $\varphi: V \to V'$ given by $\varphi(q,u,t) = u$
induces a morphism 
$
    \varphi: 
    (\Lambda; \mathring{V})
        \rightharpoonup
    (\Lambda'; \mathring{V}')~
$
in $\GraphSieve_{*}.$ Now, we apply the functor $(\N,\blank)$ 
to the morphism $\varphi:\Lambda\rightharpoonup\Lambda'$ we obtain a postcomposition function ${\varphi_{*}} = \varphi \circ \blank: (\N \comma \Lambda) \to (\N \comma \Lambda').$ Simply speaking, ${\varphi_{*}}$ takes an infinite path 
$
    (q_0,u_0,t_0)
        \to
    (q_1,u_1,t_1)
        \to
    (q_2,u_2,t_2)
        \to
    \ldots    ~
$
in $\Lambda,$ where $(q_0,u_0,t_0)\in\mathring{V},$ to an infinite path
$
    u_0
        \to
    u_1
        \to
    u_2
        \to
    \ldots~    
$
in $\Lambda'.$ 

Taking $\varphi = \LEDGER$, and recalling the definition of the set of valid $\LEDGER$ traces we obtain that 
$
    \Trc({\LEDGER}) = \Im {\varphi_{*}}
$

Let $(\STRUC,\pi,\kappa)$ be a structured contract implemented on $\LEDGER.$ Then, a similar construction exists for $\STRUC$. Let $\Gamma=(W,F)$ be a graph whose vertices are pairs $(s,i)\in\State\times\Input$ such that there exists $s'\in\State$ with $(*,s,i,s')\in\STRUC$ and there exists an edge $(s,i) \to (s',i')$ if and only if $(*,s,i,s')\in \STRUC.$ In the simple graph $\Gamma$ we specify the set of initial vertices 
\[
    \mathring{W}:=
    \left\{
        (s,i)\in \State_0 \times \Input 
        ~|~
        \exists~s'\in\State:~ (*,s,i,s')\in\STRUC
    \right\}  .~
\]

$\STRUC$ induces another graph $\Gamma'=(W',F')$ in the same way that $\Lambda$ is induced for $\LEDGER$. Here,  $W'$ is a set of all $s\in\State$ such that there are $i\in\Input$ and $s'\in\State$ with 
$
    (*,s,i,s')\in\STRUC ~
$
and there is an edge $s \to s'$ if and only if there are $i,i'\in\Input$ with 
$
    (s,i),(s',i')\in W'
$
and
$
    (*,s,i,s')\in\STRUC.
$
The set of initial vertices of $\Gamma$ is 
$
    \mathring{W}' := W'\cap \State_{0}. ~
$
Similarly to $\varphi:(\Lambda;\mathring{V})\rightharpoonup (\Lambda';\mathring{V}'),$ the projection map $\psi: W \to W'$ induces a morphism 
$
    \psi: 
    (\Gamma; \mathring{W})
        \rightharpoonup
    (\Gamma'; \mathring{W}')~
$
in $\GraphSieve_{*}.$ Again, applying the functor $(\N,\blankd):\GraphSieve_{*}\to\SetCategory$ to $\psi:\Gamma\rightharpoonup\Gamma',$ we obtain a postcomposition function ${\psi_{*}} = \psi \circ \blank: (\N \comma \Gamma) \to (\N \comma \Gamma'),$ 
which takes an infinite path 
$
    (s_0,i_0)
        \to
    (s_1,i_1)
        \to
    (s_2,i_2)
        \to
    \ldots ~
$
in $\Gamma,$ where $(s_0,i_0)\in\mathring{W},$ to an infinite path
$
    s_0
        \to
    s_1
        \to
    s_2
        \to
    \ldots    ~
$
in $\Gamma'.$ Taking $\psi = \STRUC$, and recalling the definition of the set of valid $\STRUC$ traces we obtain 
$
    \Trc({\STRUC}) = \Im {\psi_{*}}.~
$

Maps $\pi:\UTxO\rightharpoonup \State$ and $\kappa:\Tx\to\Input$ induce well-defined morphisms $\sigma(q,u,t) = (\pi u, \kappa t)$ and $\sigma'(u) = \pi u$ in $\GraphSieve_{*}$ such that the following diagram commutes:
\vspace{-3mm}
\[\begin{tikzcd}[ampersand replacement=\&,sep=2.25em]
	{(\Lambda;\mathring{V})} \& {(\Gamma;\mathring{W})} \\
	{(\Lambda';\mathring{V}')} \& {(\Gamma';\mathring{W}')}
	\arrow["\sigma", harpoon, from=1-1, to=1-2]
	\arrow["{\sigma'}", harpoon, from=2-1, to=2-2]
	\arrow["\varphi"', harpoon, from=1-1, to=2-1]
	\arrow["\psi", harpoon, from=1-2, to=2-2]
\end{tikzcd}
\vspace{-3mm}\]

This shows that when $\pi$ and $\kappa$ specify a correct implementation of the $\STRUC$ program on the ledger, a (valid) state trace  $\vec{v} \in \Trc(\LEDGER)$, corresponding to a specific lift (i.e., a path in graph that contains both state and input information), maps to a (valid) trace of $\vec{w} \in \Trc(\STRUC)$. Moreover, $\vec{w}$ necessarily corresponds to the lift that is obtained by applying the function induced by $\pi$ and $\kappa$ to the lift of $\vec{v}$. In other words, $\vec{v}$ and $\vec{w}$ are "generated" by corresponding (via $\kappa$) sequences of inputs.
This conclusion is the key consequence of fulfilling the proof obligation \ref{def:structured} required as part of instantiating $(\STRUC, \pi, \kappa)$. 

Our goal is now to prove that the above square induces a well-defined continuous postcomposition function $\overline{\pi}:\Trc({\LEDGER}) \to \Trc({\STRUC})$, thereby ensuring that correct program implementations are also well-behaved with respect to certain kinds of properties. To achieve this goal we will study a specific kind of metric.
\subsection{Ultrametric spaces and traces}

Sets of morphisms $(\N,G)$ in $\GraphSieve_{*}$ also carry an additional structure: they are metric spaces. A metric $d$ on the set $(\N,G)$ is 
$
    d(\vec{a},\vec{b}) 
        :=    
    \inf 
    \left\{ 
        2^{-k} 
        ~|~ 
        k\geq 0, a_i \neq b_i  
    \right\}.~
$
The function $d$ also satisfies a strengthened version of the triangle inequality:
$
    d(\vec{a},\vec{b}) 
        \leq    
    \max\{d(\vec{a},\vec{c}), d(\vec{c},\vec{b})\}.~
$

This metric has been previously defined for arbitrary execution traces of arbitrary programs\cite{liveness}. 
In this work, we apply this definition to the space of only valid execution traces. Spaces admitting the type of metric given above are said to be \textit{ultrametric} \cite{ultrametric}. Ultrametric spaces have the following properties, which we will later make use of:

\begin{proposition}
    Let $(X,d)$ be an ultrametric space. Then, (i) any triangle in $X$ is isosceles, i.e., for any $x,y,z\in X,$ two of the numbers $d(x,y),d(y,z),d(x,z)$ are equal and greater than the third one; (ii) every point inside an open ball is its center; (iii) if two open balls intersect, then one is contained in another; (iv) every open ball of a positive radius is a closed set.
\end{proposition}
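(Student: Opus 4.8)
The plan is to derive all four statements from the single strong triangle inequality $d(x,y) \le \max\{d(x,z),d(z,y)\}$, proving them in an order in which each can lean on the previous ones. I would establish (i) first, since it is the sharpest consequence of the ultrametric axiom, and then obtain (ii)--(iv) from it together with a short ball-centering lemma.

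For (i), I would fix $x,y,z$ and dispatch the degenerate case where all three distances coincide (the claim is then trivial), and otherwise reduce, by relabelling the three points, to the situation $d(x,z) < d(y,z)$. Applying the strong inequality to the pair $y,z$ gives $d(y,z) \le \max\{d(y,x),d(x,z)\}$; since $d(x,z)$ is strictly smaller than $d(y,z)$, the maximum cannot be realised by $d(x,z)$, so it is realised by $d(y,x)$, forcing $d(y,z) \le d(x,y)$. Applying the inequality to $x,y$ gives $d(x,y) \le \max\{d(x,z),d(z,y)\} = d(z,y)$, again because $d(x,z)<d(y,z)$. Combining the two bounds yields $d(x,y) = d(y,z)$, and both strictly exceed $d(x,z)$, which is exactly the isosceles claim.

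For (ii), I would take an open ball $B(a,r) = \{x : d(a,x) < r\}$ and any point $b$ inside it, so $d(a,b) < r$, and show $B(a,r) = B(b,r)$ directly: for $x \in B(a,r)$ the strong inequality gives $d(b,x) \le \max\{d(b,a),d(a,x)\} < r$, which is one inclusion, and symmetry supplies the other. Statement (iii) is then immediate: if $B(a,r)$ and $B(b,s)$ share a point $c$, with say $r \le s$, then by (ii) each ball equals the ball of its own radius centred at $c$, and $B(c,r) \subseteq B(c,s)$ since $r \le s$, so one ball lies inside the other. For (iv) I would show the complement of $B(a,r)$ is open: given $y$ with $d(a,y) \ge r$, the ball $B(y,r)$ is disjoint from $B(a,r)$, because a common point $z$ would give $d(a,y) \le \max\{d(a,z),d(z,y)\} < r$, a contradiction; hence every point outside the ball has an open neighbourhood outside it, and the ball is closed.

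I expect the only genuinely delicate step to be the case analysis in (i): the conclusion is not merely that \emph{some} two sides are equal, but that the two equal sides are precisely the two largest, so the argument must carefully isolate the strictly smallest distance (via the relabelling) and separately dispose of the all-equal case, rather than overlooking it. Once (i) and the centring lemma (ii) are in hand, parts (iii) and (iv) are short and essentially mechanical applications of the strong inequality.
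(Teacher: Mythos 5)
The paper does not actually prove this proposition: it is stated as a known fact about ultrametric spaces, with the proof deferred to the cited literature, so there is no in-paper argument to compare against. Your proof is correct and is the standard derivation: all four parts follow from the strong triangle inequality, and your ordering (isosceles property first, then the ball-recentering lemma, then containment of intersecting balls and closedness of open balls as quick consequences) is the usual and efficient one. Each step checks out: in (i) the relabelling is legitimate because any two of the three distances share a common point, so a strictly smaller one can always be placed as $d(x,z)$, and the two applications of the strong inequality then force $d(x,y)=d(y,z)>d(x,z)$; in (ii) the inclusion $B(a,r)\subseteq B(b,r)$ plus the symmetry of the hypothesis $d(a,b)<r$ gives equality; (iii) and (iv) are immediate as you say, with positivity of the radius in (iv) needed exactly so that $B(y,r)$ is a genuine neighbourhood of the outside point. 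One small remark: as stated, part (i) (``two \ldots are equal and greater than the third'') fails literally for equilateral triangles, and the correct reading is ``equal and not less than the third''; you were right to flag the all-equal case and treat it separately rather than let it slip through the strict inequality in your main case.
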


From the above proposition following properties of the ultrametric space can be derived $((\N,G),d)$: 
(i) $d$ takes values in a countable set $\{0,1,\frac{1}{2},\frac{1}{4},\ldots\};$ 
(ii) the whole space coincides with a closed unit ball, i.e., $(\N,G) = \overline{B}(\vec{v},1)$ for any $\vec{v}\in(\N,G);$ 
(iii) for any $\vec{u}\in(\N,G)$ and $r > 0,$
    $
      B(\vec{u}, r) = 
      B(\vec{u},2^{-\underline{r}}) = 
      B(\vec{u},2^{-(\underline{r}+1)}),~
    $
where $\underline{r} = \left\lfloor -\log_2 r\right\rfloor$, i.e., there are only a countable number of balls with a center~$\vec{u};$ (iv) for any $\vec{u}\in(\N,G)$ and $r > 0,$
$
    B(\vec{u}, r) := 
    \left\{
    \vec{v}\in(\N,G)
    ~|~
    \vec{v}[n] = \vec{u}[n]
    \right\}, ~
$
where $\vec{u}[n]$ is a head of the path of $\vec{u}$ of length $n$ in $G$,  
i.e., a ball of radius $r$ with a center $\vec{u}$ consists of infinite paths in $G,$ that share an $\underline{r}\mhyphen$ head with $\vec{u}.$ From now we will treat $(\N,G)$ as an ultrametric space with the described above ultrametric $d.$

The following proposition states that morphisms in $\GraphSieve_{*}$ induce functions between sets of infinite paths that are, in fact, continuous maps of ultrametric spaces. To prove this, we will use the fact that the induced map $f_{*}$ simply applies $f$ to each individual state in the infinite path.

\begin{proposition}
If $f:G\rightharpoonup G'$ is a morphism in $\GraphSieve_{*},$ then 
$
    f_{*}:(\N,G) \to (\N,G')~
$
is a continuous map. 
\end{proposition}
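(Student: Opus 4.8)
The plan is to prove the stronger statement that $f_{*}$ is \emph{non-expanding} with respect to the ultrametric $d$, i.e. $d(f_{*}\vec{a}, f_{*}\vec{b}) \leq d(\vec{a},\vec{b})$ for all $\vec{a},\vec{b}\in(\N,G)$, and then to observe that any non-expanding ($1$-Lipschitz) map between metric spaces is automatically continuous. This route is natural because it simultaneously delivers the non-expanding property needed for the subsequent corollary on structured contracts.

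First I would record that $f_{*}$ is genuinely well-defined and acts coordinatewise. By Example \ref{ex:paths}, an element of $(\N,G)$ is an everywhere-defined path $p:\N\to G$ whose starting vertex $p(0)$ lies in $\mathring{V}$; since $\mathring{V}\subseteq\Def f$ and $\Def f$ is a sieve in $G$, every vertex of $p$ (being reachable from $p(0)$ along $p$) remains in $\Def f$, so $f\circ p$ is everywhere defined and lands in $(\N,G')$ (its start vertex lies in $\mathring{V}'$ because $f(\mathring{V})\subseteq\mathring{V}'$). Concretely, as already noted before the statement, $f_{*}$ applies $f$ to each vertex of the path, so $f_{*}(\vec{v})_i = f(v_i)$.

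The core step is the distance estimate. Using the ball characterization of the ultrametric, namely property (iv) of $((\N,G),d)$, it suffices to show that whenever $\vec{a}$ and $\vec{b}$ share a head of length $n$ (i.e. $a_i = b_i$ for all $i < n$), so do $f_{*}\vec{a}$ and $f_{*}\vec{b}$. This is immediate from the coordinatewise description: $a_i = b_i$ for $i < n$ forces $f(a_i) = f(b_i)$ for $i < n$. Hence the first index at which $f_{*}\vec{a}$ and $f_{*}\vec{b}$ differ is no smaller than the first index at which $\vec{a}$ and $\vec{b}$ differ, which yields $d(f_{*}\vec{a}, f_{*}\vec{b}) \leq d(\vec{a},\vec{b})$.

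Finally I would conclude continuity. One option is to invoke the general fact that a $1$-Lipschitz map is (uniformly) continuous; alternatively one can argue directly with balls, since the estimate above says $f_{*}(B(\vec{u},r)) \subseteq B(f_{*}\vec{u},r)$ for every $\vec{u}$ and every $r>0$, so preimages of open balls are open, and open balls form a basis for the metric topology. I do not expect a real obstacle: the only point requiring genuine care is the well-definedness of $f_{*}$ on \emph{all} of $(\N,G)$, which is precisely where the sieve condition on $\Def f$ and the preservation of initial vertices are used. Once $f_{*}$ is seen to act coordinatewise, the ultrametric estimate is purely formal.
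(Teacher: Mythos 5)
Your proof is correct, and its core computation is exactly the paper's: coordinatewise application of $f$ preserves common heads of paths. The paper's own proof of this proposition is precisely your ``alternative'' ball argument --- it shows directly that for any $\vec{w}\in B(\vec{u},2^{-n})$ one has $f_{*}\vec{w}\in B(f_{*}\vec{u},2^{-n})$ --- while your primary route, establishing the non-expanding estimate $d(f_{*}\vec{a},f_{*}\vec{b})\leq d(\vec{a},\vec{b})$ first and deducing continuity, is what the paper does separately in its later proposition that the functor $(\N,\blankd)$ factors through $\UMet$. So you have effectively merged the two results into one argument; this is a legitimate and arguably cleaner organization, since the non-expanding property is what the trace-mapping corollary actually needs, and continuity comes for free from the $1$-Lipschitz bound. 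One genuine improvement on the paper's write-up: you explicitly verify that $f_{*}$ is well-defined on all of $(\N,G)$, using that $\Def f$ is a sieve containing $\mathring{V}$ and that $f(\mathring{V})\subseteq\mathring{V}'$, a point the paper leaves implicit.
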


\begin{proof}
    For arbitrary $\vec{u}$ in $(\N,G),$ and $n\geq 0,$ take any $\vec{v} \in B(f_{*}\vec{u}, 2^{-n}).$ Then $\vec{v}$ has form
    $
        fu_0
            \to
        \ldots
            \to
        fu_n
            \to
        v_{n+1}
            \to
        \ldots  .~  
    $
    For any $\vec{w}\in B(\vec{u}, 2^{-n}),$ its image under $f_{*}$ is
    $
        fu_0
            \to
        \ldots
            \to
        fu_n
            \to
        fw_{n+1}
            \to
        \ldots    .~ 
    $
    Hence, $f\vec{w}\in B(f_{*}\vec{u},2^{-n}).$ and $f_{*}$ is continuous.     
\end{proof}

In order to study ultrametric structure in $\GraphSieve_{*}$, we define the following category:

\begin{definition}
    A \textit{non-expanding map} between ultrametric spaces $(X,d_X)$ and $(Y,d_Y)$ is a continuous map $f:X\to Y$ such that
$
    d_Y(f(x_1),f(x_2))
        \leq
    d_X(x_1,x_2)~
$
for any $x_1,x_2\in X.$ A category of ultrametric spaces and non-expanding maps we denote~$\UMet.$
\end{definition}

We now use the $\UMet$ category to express that all maps between sets of infinite paths in objects of our category $\GraphSieve_{*}$ (i.e., simple graphs) are continuous and non-expanding:

\begin{proposition}
    \label{prop:continuous}
    A functor $(\N,\blankd):\GraphSieve_{*}\to\SetCategory$ factors through the category $\UMet.$
\end{proposition}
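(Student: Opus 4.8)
The plan is to exhibit the functor $(\N,\blankd)$ as the composite of a functor $\GraphSieve_{*}\to\UMet$ with the forgetful functor $\UMet\to\SetCategory$; concretely, I would upgrade each set $(\N,G)$ to an ultrametric space and each postcomposition map $f_{*}$ to a non-expanding map, and then observe that functoriality transfers for free. On objects there is nothing new to prove: we have already equipped every $(\N,G)$ with the ultrametric $d$, so the assignment $G \mapsto ((\N,G),d)$ lands in $\UMet$ by construction. The entire content of the proof therefore concerns morphisms.

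First I would recall, from the preceding proposition, two facts about a morphism $f:G\rightharpoonup G'$ in $\GraphSieve_{*}$. The map $f_{*}:(\N,G)\to(\N,G')$ is well defined: the sieve property of $\Def f$, together with $\mathring{V}\subseteq\Def f$, guarantees that a path starting at an initial vertex never leaves $\Def f$, so $f_{*}\vec{u}=(fu_0,fu_1,\ldots)$ is defined term by term and again begins in $\mathring{V}'$. Moreover $f_{*}$ is continuous. What remains to be shown is the non-expanding inequality $d(f_{*}\vec{u},f_{*}\vec{v})\leq d(\vec{u},\vec{v})$ for all $\vec{u},\vec{v}\in(\N,G)$.

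The key step exploits that $f_{*}$ acts by applying $f$ pointwise, while $d$ is governed by the common head of two paths, as made explicit by the description of the balls $B(\vec{u},r)$ above. Suppose $d(\vec{u},\vec{v})=2^{-k}$, so $\vec{u}$ and $\vec{v}$ share a head of length $k$, i.e. $u_i=v_i$ for $i<k$. Then $fu_i=fv_i$ for every $i<k$, so $f_{*}\vec{u}$ and $f_{*}\vec{v}$ share a head of length at least $k$; hence their first index of disagreement is no smaller than $k$ and $d(f_{*}\vec{u},f_{*}\vec{v})\leq 2^{-k}=d(\vec{u},\vec{v})$. Equivalently, $f_{*}$ maps $B(\vec{u},2^{-k})$ into $B(f_{*}\vec{u},2^{-k})$ for every $k$, which is precisely the non-expanding condition, so each $f_{*}$ is a morphism of $\UMet$.

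Finally I would close the argument by noting that functoriality is inherited. Since $(\N,\blankd)$ is already a functor into $\SetCategory$, it preserves identities and composition at the level of set maps; as the identity is trivially distance-preserving and a composite of non-expanding maps is non-expanding, the same assignments define a functor $\GraphSieve_{*}\to\UMet$ whose composite with the forgetful functor $\UMet\to\SetCategory$ equals $(\N,\blankd)$. I do not expect a genuine obstacle here: the proof is short once the two care-points are settled, namely that $f_{*}$ is everywhere defined (handled by the sieve condition) and that $d$ is the first-disagreement ultrametric, so that preservation of common heads yields the inequality. The latter is the only place where one must match the stated metric against the ball description to be sure the inequality points the right way.
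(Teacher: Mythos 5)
Your proposal is correct and follows essentially the same route as the paper: the paper likewise reduces the claim to checking that each $f_{*}$ is non-expanding, and argues exactly as you do that if $\vec{u},\vec{v}$ share a head of length $n$ (i.e.\ $d(\vec{u},\vec{v})=2^{-n}$), then applying $f$ pointwise preserves that common head, giving $d(f_{*}\vec{u},f_{*}\vec{v})\leq 2^{-n}$. Your additional remarks on well-definedness via the sieve condition and on inheriting functoriality from the set-level functor are points the paper takes for granted, so they are a harmless (indeed slightly more careful) elaboration rather than a different approach.
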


\begin{proof}
    The only thing to check is that $f_{*}:(\N,G)\to(\N,G')$ is non-expanding, for any $f:G\rightharpoonup G'.$ Take any $\vec{u},\vec{v}\in(\N,G).$ Suppose that 
$
    d(\vec{u},\vec{v}) = 2^{-n}.~
$
Hence, there are vertices $w_0,\ldots,w_n$ of $G$ such that 
$
    \vec{u} = 
    \left(
        w_0 \to
        \ldots \to
        w_n \to
        u_{n+1} \to
        \ldots
    \right)    
$
~ and ~
$
    \vec{v} = 
    \left(
        w_0 \to
        \ldots \to
        w_n \to
        v_{n+1} \to
        \ldots
    \right)    .~
$
Applying $f_{*}$ to $\vec{u}$ and $\vec{v},$ we obtain that $f_{*}\vec{u}$ and $f_{*}\vec{v}$ share a common head of length at least $n,$ i.e., 
$
    d(f_{*}\vec{u},f_{*}\vec{v}) \leq 2^{-n} = d(\vec{u},\vec{v})~
$
and $f_{*}$ is non-expanding.
\end{proof}

The above proposition allows us to consider $(\N,\blankd):\GraphSieve_{*}\to\SetCategory$ as a functor $(\N,\blankd):\GraphSieve_{*}\to\UMet$ instead.
We introduce some additional concepts for reasoning about program behaviour,
including executions, properties, and safety, which we can interpret within the framework we defined here.
Infinite lists of elements of set $S$ are called \textit{executions} \cite{liveness}. They represent sequences of states of some process, for whom $S$ serve as a set of all possible states. The subset $P\subseteq\listsinf{S}$ is identified with its characteristic function $\chi_P:\listsinf{S}\to\Bool,$ and the latter is called a \textit{property}. 

A property $P$ is said to be a \textit{safety property}, if $P$ does not hold for an execution $\vec{s},$ then at some state $s_i$ some ``\textit{bad thing}'' must happen. Such a ``bad thing'' must be irremediable because a safety property states that the ``bad thing'' never happens during execution. In other words, we characterize $P\subseteq \listsinf{S}$ via a statement about its complement:
\[
    \vec{s}\notin P 
    ~~\Leftrightarrow~~
    (\exists~n\geq 0)
    (\forall \vec{t}\in\listsinf{S})
    \{(s_0,\ldots,s_n,t_0,t_1,\ldots)\notin P\}
\]
Using the ultrametric $d$ for $\listsinf{S},$ $\vec{s}\notin P$ is equivalent to 
$
    (\exists~ n\geq 0)
    \{
        B(\vec{s},2^{n}) \cap P = \varnothing
    \}.~
$
In other words, the complement of $P$ contains every point with some its neighborhood. So the safety property $P$ uniquely defines a closed subset in $(\listsinf{S},d)$ and vice versa. 

Another important class of properties of execution traces is \emph{liveness} \cite{liveness}, which, like safety, can also be defined in topological terms. An arbitrary property of traces can be defined in terms of a liveness and safety property. The structured contract formalism, as well as blockchains in general, however, are more amenable to being studied from the point of view of guaranteeing safety. Investigating liveness in this setting is equally important, but poses more of a challenge. We, therefore, leave it for future work. 


Recall also that we defined the trace $\Trc(f)$ of $f:(G;\mathring{V})\rightharpoonup(G';\mathring{V}')$ as the image of $f_{*}$, i.e., a subset of $(\N \comma G')$. Recall also that the representable functor $(\N,\blank)$ factors through $\UMet$, inducing an ultrametric in $(\N \comma G')$, so that
traces of morphisms in $\GraphSieve_{*}$ are themselves ultrametric spaces. That is, we endow the subset
$
    \Im(f_{*}) \subseteq (\N,G')~
$
with the ultrametric structure induced via metric from $(\N,G')$.

We obtain that a map in $\GraphSieve_{*}$ bewteen two graphs induces a morphism between sets of infinite paths in the those graphs in a way that preserves the ultrametric structure as it is non-expanding:

\begin{lemma}
    A commutative square
    \vspace{-3.5mm}
    \[\begin{tikzcd}[ampersand replacement=\&,sep=2.25em]
        {(G;\mathring{V})} \& {(H;\mathring{W})} \\
        {(G';\mathring{V}')} \& {(H';\mathring{W}')}
        \arrow["p", harpoon, from=1-1, to=1-2]
        \arrow["{q}", harpoon, from=2-1, to=2-2]
        \arrow["f"', harpoon, from=1-1, to=2-1]
        \arrow["g", harpoon, from=1-2, to=2-2]
    \end{tikzcd}
    \vspace{-3.5mm}
    \]
    in $\GraphSieve_{*}$ induces a non-expanding postcomposition function $\overline{q}:  \Trc(f) \to  \Trc(g).$ 
\end{lemma}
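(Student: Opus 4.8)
The plan is to exhibit $\overline{q}$ as the restriction of the postcomposition map $q_{*}:(\N,G')\to(\N,H')$ to the subspace $\Trc(f)=\Im f_{*}\subseteq(\N,G')$, and then to verify the two things this requires: that this restriction actually lands inside $\Trc(g)=\Im g_{*}$, and that it is non-expanding for the induced ultrametrics. First I would observe that $q_{*}$ is a legitimate map out of $\Trc(f)$: every element of $\Trc(f)$ is the image under $f_{*}$ of a path starting in $\mathring{V}$, and since $f(\mathring{V})\subseteq\mathring{V}'$ such images start in $\mathring{V}'\subseteq\Def q$, so $q_{*}$ is everywhere defined on $\Trc(f)$ (using that $\Def q$ is a sieve containing the initial vertices, exactly as in Example \ref{ex:paths}).

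The key step — and the one I expect to be the \emph{main obstacle} — is well-definedness, i.e.\ checking $q_{*}(\Trc(f))\subseteq\Trc(g)$. Here I would use that $(\N,\blank)$ is a functor, hence respects composition, together with the commutativity $g\circ p=q\circ f$ of the square. Given $\vec{a}\in\Trc(f)$, write $\vec{a}=f_{*}\vec{u}$ with $\vec{u}\in(\N,G)$; then
\[
    q_{*}\vec{a}=q_{*}f_{*}\vec{u}=(q\circ f)_{*}\vec{u}=(g\circ p)_{*}\vec{u}=g_{*}p_{*}\vec{u}\in\Im g_{*}=\Trc(g),
\]
where $p_{*}\vec{u}\in(\N,H)$ is defined because $p$ is a morphism in $\GraphSieve_{*}$ and $\vec{u}$ starts in $\mathring{V}\subseteq\Def p$. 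Thus $\overline{q}:=q_{*}|_{\Trc(f)}$ is well-defined with values in $\Trc(g)$.

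Finally, non-expansiveness comes almost for free from Proposition \ref{prop:continuous}, which already shows $q_{*}:(\N,G')\to(\N,H')$ is non-expanding. Since the ultrametric on $\Trc(f)$ is the subspace metric inherited from $(\N,G')$ and that on $\Trc(g)$ the subspace metric inherited from $(\N,H')$, for any $\vec{a},\vec{b}\in\Trc(f)$ we get $d(\overline{q}\vec{a},\overline{q}\vec{b})=d(q_{*}\vec{a},q_{*}\vec{b})\leq d(\vec{a},\vec{b})$, so the restriction remains non-expanding. I do not anticipate any difficulty in this last step; the only genuinely load-bearing ingredients are functoriality of $(\N,\blank)$ and the commutativity of the square, both of which are already available.
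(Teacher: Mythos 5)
Your proposal is correct and follows essentially the same route as the paper's proof: both define $\overline{q}$ as the restriction of $q_{*}$ to $\Trc(f)=\Im f_{*}$, establish well-definedness by lifting $\vec{a}=f_{*}\vec{u}$ and using $q_{*}f_{*}=g_{*}p_{*}$ (the commutativity of the square after applying the functor $(\N,\blank)$, which is the same identity your functoriality computation produces), and obtain non-expansiveness from Proposition \ref{prop:continuous} together with the subspace ultrametrics on $\Trc(f)$ and $\Trc(g)$. Your preliminary check that $q_{*}$ is everywhere defined on $\Trc(f)$ is a small elaboration the paper leaves implicit in the functor's definition, not a divergence in approach.
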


\begin{proof}
    Applying the functor $(\N,\blankd):\GraphSieve_{*}\to\UMet$ to a given square, we obtain a commutative diagram of non-expanding maps
    \vspace{-3.5mm}
\[\begin{tikzcd}
	{(\N,G)} &&& {(\N,H)} \\
	& {\Trc(f)} &&& {\Trc(g)} \\
	{(\N,G')} &&& {(\N,H')}
	\arrow["{p_{*}}", from=1-1, to=1-4]
	\arrow[two heads, from=1-1, to=2-2]
	\arrow["{f_{*}}"', from=1-1, to=3-1]
	\arrow[two heads, from=1-4, to=2-5]
	\arrow["{g_{*}}"{pos=0.3}, from=1-4, to=3-4]
	\arrow["{\overline{q}}", dashed, from=2-2, to=2-5]
	\arrow[hook', from=2-2, to=3-1]
	\arrow[hook', from=2-5, to=3-4]
	\arrow["{q_{*}}", from=3-1, to=3-4]
\end{tikzcd}
\vspace{-3.5mm}
\]
\newline For any $\vec{s}\in\Trc{f}=\Im f_{*}$ there is $\vec{v}\in(\N,G)$ such that
$
    f_{*}(\vec{v}) = \vec{s}    .~
$
Since $q_{*}f_{*} = g_{*}p_{*},$ then
$
    \overline{q}(\vec{s}) = 
    q_{*}(\vec{s}) = 
    q_{*}(f_{*}(\vec{v})) = 
    g_{*}(p_{*}(\vec{v})) \in \Im g_{*}.~
$
Hence, we checked that $\overline{q}$ is well-defined, i.e., its codomain is $\Trc(g).$ 
Since the inclusion of a subspace $\Im (f_{*})$ in $(\N, G')$ is a non-expanding map, so is $\overline{q}$ as composition of the inclusion and $q_{*}.$
\end{proof}

Applying the above result to structured contracts, we get:

\begin{corollary}[\textbf{Trace-mapping lemma}]
    A structured contract $(\STRUC, \pi, \kappa)$ for $\LEDGER$ induces a non-expanding map 
\[
    \overline{\pi}:
    \Trc({\LEDGER})
        \to
    \Trc({\STRUC}) ~
\]
between ultrametric spaces.
\end{corollary}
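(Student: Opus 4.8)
The plan is to obtain this corollary as a direct instantiation of the preceding lemma, applied to the commutative square in $\GraphSieve_{*}$ that was built earlier from a structured contract, namely
\[
\begin{tikzcd}[ampersand replacement=\&,sep=2.25em]
	{(\Lambda;\mathring{V})} \& {(\Gamma;\mathring{W})} \\
	{(\Lambda';\mathring{V}')} \& {(\Gamma';\mathring{W}')}
	\arrow["\sigma", harpoon, from=1-1, to=1-2]
	\arrow["{\sigma'}", harpoon, from=2-1, to=2-2]
	\arrow["\varphi"', harpoon, from=1-1, to=2-1]
	\arrow["\psi", harpoon, from=1-2, to=2-2]
\end{tikzcd}
\]
with $\sigma(q,u,t) = (\pi u, \kappa t)$ and $\sigma'(u) = \pi u$. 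The lemma says that any such square induces a non-expanding postcomposition map between the traces of its two vertical legs, $\overline{\sigma'}: \Trc(\varphi) \to \Trc(\psi)$, so the whole argument reduces to matching this data up with the statement and checking the hypotheses of the lemma.

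First I would verify that the square is genuinely a morphism-square in $\GraphSieve_{*}$: all four arrows must be partial sieve-defined homomorphisms that are defined on, and preserve, the initial vertices. For $\varphi$ and $\psi$ this was already arranged when the graphs $\Lambda,\Lambda',\Gamma,\Gamma'$ were constructed. The content lies in $\sigma$ and $\sigma'$, and here the key input is the correctness condition in the definition of a structured contract: it is precisely that condition which makes $\Def\sigma = \Def\pi$ upward closed (hence a sieve) and which sends a $\LEDGER$-edge to a $\STRUC$-edge under $(\pi,\kappa)$, yielding the homomorphism property. Commutativity is then the bare equality $\psi\circ\sigma = \sigma'\circ\varphi$ of vertex maps, both sending $(q,u,t)$ to $\pi u$.

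Next I would invoke the identifications established above, $\Trc(\varphi) = \Im\varphi_{*} = \Trc(\LEDGER)$ and $\Trc(\psi) = \Im\psi_{*} = \Trc(\STRUC)$, so that the map produced by the lemma is exactly $\overline{\sigma'}: \Trc(\LEDGER)\to\Trc(\STRUC)$. Since $\sigma'$ is the morphism induced by $\pi$ on the $\UTxO$/$\State$ component, this is the desired map $\overline{\pi}$, and it is non-expanding, hence continuous, by the lemma. The ultrametric structure on both traces is the one inherited from the ambient path spaces $(\N,\Lambda')$ and $(\N,\Gamma')$ through the functor $(\N,\blankd)$, as discussed.

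I expect the only real obstacle to be bookkeeping rather than mathematics: one must be careful that the partiality of $\pi$ is compatible with the sieve requirement, so that $\overline{\pi}$ is defined on all of $\Trc(\LEDGER)$ and genuinely lands in $\Trc(\STRUC)$, and that the initial-vertex conditions $\UTxO_0\subseteq\Def\pi$ and $\pi(\UTxO_0)\subseteq\State_0$ from the definition of valid $\STRUC$ traces are exactly what supply the initial-vertex-preservation hypotheses of the lemma. Once these compatibilities are confirmed, no further computation is required.
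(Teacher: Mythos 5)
Your proposal is correct and matches the paper exactly: the paper derives this corollary by applying the preceding commutative-square lemma to the $\Lambda,\Gamma,\Lambda',\Gamma'$ square with vertical legs $\varphi,\psi$ and horizontal legs $\sigma,\sigma'$, using the identifications $\Trc(\varphi)=\Trc(\LEDGER)$ and $\Trc(\psi)=\Trc(\STRUC)$, just as you do. Your additional checks (that the structured-contract condition makes $\Def\sigma$ a sieve and supplies the initial-vertex hypotheses) are precisely the groundwork the paper established before stating the square, so no divergence exists.
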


The trace-mapping lemma expresses a practical result of this work: a safety property 
(i.e., a closed subset of valid traces) of $\Trc({\STRUC})$ necessarily has a corresponding safety property in $\Trc({\LEDGER})$, 
i.e., a closed subset of in the domain of the non-expanding map induced by $(\STRUC, \pi, \kappa)$. 
In particular, $\Trc({\STRUC})$, which is also a closed subset of itself, has a closed preimage in $\Trc({\LEDGER})$.
This preimage is a safety property of $\Trc({\LEDGER})$.
This safety property can be expressed in the "a specific bad thing cannot be fixed if it happens" language as: 
if a trace $\vec{v}\in \Trc(\LEDGER)$ is not in the preimage of $\Trc(\STRUC)$, there is necessarily some $i$, such 
that $v_0~\to~\ldots~\to~v_i$ is a prefix of $\vec{v}$, and no trace of the form $v_0~\to~\ldots~\to~v_i~\to u_{i+1}~\to~\ldots $ maps to a trace in $\Trc({\STRUC})$.

Preimages of traces also carry important information about them. For this reason, we define the concept of a \textit{full lift} of a trace:

\begin{definition}
Each element $\vec{s}$ of $\Trc(f) = \Im(f_{*})$ has \textit{full lift} $\vec{v}\in(\N,G)$ via 
$
    f_{*}(\vec{v}) = \vec{s} .~
$
\end{definition}

In other words, for every trace $\vec{s}$ of $f,$ which is an infinite path, there is an infinite path $\vec{v}$ that is mapped to it. The latter yields that each head $\vec{s}[n]$ of $\vec{s}$ has a lift $\vec{v}[n]$ via
$
    f_{*}(\vec{v}[n]) = \vec{s}[n] .~
$
In the following proposition, we express when the converse holds,

\begin{proposition}
    Let $f:G\rightharpoonup G'$ be a morphism in $\GraphSieve_{*}.$ Then 
    $
        \overline{\Trc(f)} 
        =
        \bigcap_{n\geq 0} \Trc_{n}(f)     ,~
    $
    where $\overline{\Trc(f)}$ is the closure of $\Trc(f)$ in $(\N,G')$ and 
    $
        \Trc_n(f) :=
        \{
            \vec{s}:\N\to G'
            ~|~
            \exists~ \vec{v}:\N\to G:~ f_{*}(\vec{v}[n]) = \vec{s}[n]
        \}    ~
    $
    is a closed set of infinite paths in $G'$ that has \textit{$n\mhyphen$truncated lifts} in $(\N,G).$ Hence, $\Trc(f)$ is closed if and only if the following holds:
    any infinite path in $G'$ has a full lift in $G$ if and only if an infinite path in $G'$ has an $n\mhyphen$truncated lift in $G$ for all $n\geq 0$.   
\end{proposition}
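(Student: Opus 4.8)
The plan is to prove the equality $\overline{\Trc(f)} = \bigcap_{n\geq 0}\Trc_n(f)$ by two inclusions, having first established that each $\Trc_n(f)$ is closed, and then to read off the stated biconditional as the condition under which $\Trc(f)$ coincides with its own closure. Throughout I will use two elementary facts recorded earlier: that $f_{*}$ acts pointwise, so that $(f_{*}\vec{v})[n] = f_{*}(\vec{v}[n])$ for every infinite path $\vec{v}$ in $G$, and that by the ball description of the ultrametric $d$ one has $d(\vec{a},\vec{b})\leq 2^{-n}$ exactly when $\vec{a}$ and $\vec{b}$ share a length-$n$ head, i.e. $\vec{a}[n]=\vec{b}[n]$.

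First I would show that $\Trc_n(f)$ is closed. The point is that membership of $\vec{s}$ in $\Trc_n(f)$ depends only on the length-$n$ head $\vec{s}[n]$: if $\vec{s}\in\Trc_n(f)$ with witnessing path $\vec{v}$, and $\vec{s}'$ shares the same length-$n$ head with $\vec{s}$, then the same $\vec{v}$ shows $\vec{s}'\in\Trc_n(f)$. Consequently $\Trc_n(f)$ is a union of balls of radius $2^{-n}$, and by the same reasoning so is its complement. Since balls of positive radius are open (and the balls of a fixed radius partition $(\N,G')$, every point being the centre of its own ball), both $\Trc_n(f)$ and its complement are open. Hence $\Trc_n(f)$ is clopen, in particular closed, and $\bigcap_{n\geq 0}\Trc_n(f)$ is closed as an intersection of closed sets.

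Next, the inclusion $\overline{\Trc(f)}\subseteq\bigcap_{n\geq 0}\Trc_n(f)$. Any $\vec{s}\in\Trc(f)=\Im f_{*}$ admits a full lift $\vec{v}$ with $f_{*}(\vec{v})=\vec{s}$; restricting to heads gives $f_{*}(\vec{v}[n])=\vec{s}[n]$ for every $n$, so $\vec{s}\in\Trc_n(f)$ for all $n$ and thus $\Trc(f)\subseteq\bigcap_n\Trc_n(f)$. As the right-hand side is closed, taking closures yields the claimed inclusion. For the reverse inclusion I would argue by approximation: fix $\vec{s}\in\bigcap_n\Trc_n(f)$ and $n\geq 0$. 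Since $\vec{s}\in\Trc_n(f)$ there is an infinite path $\vec{v}$ in $G$ with $f_{*}(\vec{v}[n])=\vec{s}[n]$; note $\vec{v}$ is a genuine morphism $\N\to G$, so $f_{*}(\vec{v})$ lies in $\Trc(f)$. Using pointwise-ness, $(f_{*}\vec{v})[n]=f_{*}(\vec{v}[n])=\vec{s}[n]$, so $d(f_{*}\vec{v},\vec{s})\leq 2^{-n}$, i.e. $f_{*}(\vec{v})\in B(\vec{s},2^{-n})\cap\Trc(f)$. As $n$ was arbitrary, every ball around $\vec{s}$ meets $\Trc(f)$, whence $\vec{s}\in\overline{\Trc(f)}$. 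This gives $\bigcap_n\Trc_n(f)\subseteq\overline{\Trc(f)}$ and completes the equality.

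Finally, the biconditional follows formally. Since $\Trc(f)\subseteq\bigcap_n\Trc_n(f)=\overline{\Trc(f)}$ always holds, $\Trc(f)$ is closed precisely when the reverse containment $\bigcap_n\Trc_n(f)\subseteq\Trc(f)$ holds; unwinding the definitions, this says that every $\vec{s}$ possessing an $n$-truncated lift for all $n$ already possesses a full lift, which together with the always-valid converse is exactly the stated equivalence. I expect the only real content to sit in the reverse inclusion of the main equality, namely the approximation step that turns an $n$-truncated lift into a point of $\Trc(f)$ within distance $2^{-n}$ of $\vec{s}$. It is mild because an $n$-truncated lift is itself a full path in $G$ whose only constraint is on its first $n$ steps, so its image under $f_{*}$ is automatically an element of $\Trc(f)$; the remaining bookkeeping is merely checking that $f_{*}$ commutes with taking heads and that closedness of each $\Trc_n(f)$ hinges on the head-dependence of its membership condition.
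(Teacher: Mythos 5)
Your proposal is correct and follows essentially the same route as the paper's proof: both inclusions are established exactly as in the paper, with the $n$-truncated lift $\vec{v}$ being a full path in $G$ whose image $f_{*}(\vec{v})$ lands in $\Trc(f)$ within distance $2^{-n}$ of $\vec{s}$, and the biconditional is read off from $\Trc(f)\subseteq\bigcap_{n}\Trc_n(f)=\overline{\Trc(f)}$. Your only deviation is cosmetic: you prove closedness of $\Trc_n(f)$ by observing it is a union of balls of radius $2^{-n}$ (hence clopen), whereas the paper chases a limit point directly, but both arguments rest on the same fact that membership depends only on the length-$n$ head.
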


\begin{proof}
    Let $\vec{s}$ be a limit point of $\Trc_n(f).$ Then 
$
    \exists~ \vec{t}\in B(\vec{s},2^{-n}) \cap \Trc_n(f) \neq \varnothing.~
$
Hence, 
$
    \vec{t} = (
        s_0 \to
        \ldots \to
        s_n \to
        t_{n+1} \to
        t_{n+2} \to
        \ldots
    ) ~
$
is an infinite path in $G'$ and there is $\vec{v}\in (\N,G)$ such that 
$
    f_{*}(\vec{v}[n]) = \vec{t}[n]    ~
$
i.e., $t_i = s_i = f(v_i)$ for $0\leq i \leq n.$ Therefore,
$
    f_{*}(\vec{v}[n]) = \vec{s}[n]~
$
so $\vec{s}\in\Trc_n(f)$ and $\Trc_n(f)$ is closed.
Obviously, $\Trc(f) \subseteq \Trc_n(f)$ for every $n\geq 0,$ so $\Trc(f)$ is a subset of $\bigcap_{n\geq 0}\Trc_n(f),$ which is a closed set. Hence,
$
    \overline{\Trc(f)} 
        \subseteq
    \bigcap_{n\geq 0}\Trc_n(f) ~
$
as the closure is the smallest closed set containing $\Trc(f).$ 

Conversely, if $\vec{s}\in \Trc_n(f)$ for all $n\geq 0,$ then 
$
    (\forall~ n\geq 0)
    (\exists~ \vec{v}\in(\N,G))    
    \{
        f_{*}(\vec{v}[n]) = \vec{s}[n]
    \} .~
$
The latter means that $d(f_{*}(\vec{v}), \vec{s})<2^{-n}$ for each $n\geq 0.$ But $f_{*}(\vec{v})\in\Im(f_{*}),$ so
$
    B(\vec{s},2^{-n}) \cap \Trc(f) \neq \varnothing    ~
$
for each $n\geq 0.$ Therefore, $\vec{s}$ is a limit point of $\Trc(f)$ and 
$
    \bigcap_{n\geq 0}\Trc_n(f)
        \subseteq
    \overline{\Trc(f)} ~
$
which was desired.
\end{proof}
\section{Properties of LEDGER system}


Since we do not give implementation details or concrete examples of structured contracts, presenting examples of their properties is left for future work. In this section, we focus on properties of the $\LEDGER$ system itself, for which we first introduce additional notation and terminology.

The set $\UTxO$ is assumed to be ``\textit{well-founded}'': for any $u\in\UTxO_0$ and any key-value pair $((b,n),o)\in u,$ there is a transaction $t\in T$ such that 
$
    b = h(t),~ \inputs(t) = \varnothing    
$  
i.e. hash part of any key in a valid initial UTxO state $u\in\UTxO_0$ comes from a transaction with empty list of inputs. In the rest of the paper we'll assume that $\UTxO$ is well-founded. 

An updated state $u'~=~(u \setminus \fun{getORefs}(t)) \sqcup \fun{mkOuts}(t)$ is constructed from a triple $(q,u,t)$ satisfying $\fun{checkTx}$ via the functions $\fun{getORefs}$ and $\fun{mkOuts}.$ The condition $\fun{checkTx}$ guarantees that all triples $(b,n,o)$ extracted from $\inputs(t)$ belong to~$u.$ After subtracting $\fun{getORefs}(t)$ from $u,$ the function $\fun{mkOuts}$ joins new pairs 
$
    \{
        (h(t),n,o)
        ~|~
        (n,o) \in \outputs(t)
    \}    
$
to the result from the previous step. For a lift
$
    (q_0,u_0,t_0) \to
    (q_1,u_1,t_1) \to
    \ldots 
$
of a valid $\LEDGER$ trace 
$
    u_0 \to u_1 \to \ldots u_k \to \ldots 
$
we introduce the following notations:
$
    r_i := \fun{getORefs}(t_i), ~
    c_i := \fun{mkOuts}(t_i). 
$
Therefore, the sequence of valid $\LEDGER$ states $u_{k+1} = (u_k \setminus r_k) \cup c_k,$ for $k\geq 0,$ where $u_0\in\UTxO_0.$
\subsection{Replay and trivial update protection}

An important property of UTxO ledgers is that an attacker is not able to disrupt the operation of the ledger program by re-applying an existing transaction. A related property is that it is not possible to apply a transaction that does not change the ledger state. We can formalize these as safety properties of ledger traces in the following way:

\begin{theorem}[(Replay and trivial update protection]
    Given an infinite path $(q_0,u_0,t_0) \to (q_1,u_1,t_1) \to \ldots$ in a graph corresponding to $\LEDGER,$ for any indexes $i<j$
    \begin{itemize}
        \item[(a)] $t_i\neq t_j$ (replay protection);
        \item[(b)] $u_i \neq u_j$ (trivial update protection).
    \end{itemize}
    Both (a) and (b) are safety properties of $\Trc({\LEDGER})$.
\end{theorem}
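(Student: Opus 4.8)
The plan is to reduce both statements to a single structural lemma about the life-cycle of output references, and then to treat the safety (closedness) claim separately. Throughout I would use the update rule $u_{k+1} = (u_k \setminus r_k) \cup c_k$ together with three facts the model guarantees: (i) every applied transaction has a nonempty input set, so $r_k = \fun{getORefs}(t_k) \neq \varnothing$ and $r_k \subseteq u_k$ (from $\fun{checkTx}$); (ii) each element of $c_k = \fun{mkOuts}(t_k)$ is a key--value pair whose key has the form $(h(t_k), n)$; and (iii) $h$ is injective (the ``unique identifier'' / collision-resistance assumption), so an output reference $(b,n)$ determines its creating transaction uniquely. I would also use well-foundedness: any $(b,n)$ occurring in $u_0 \in \UTxO_0$ has $b = h(t')$ for some input-free $t'$.

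The crux is a \emph{no-resurrection} lemma: if $o^* \in r_i$ (so $o^*$ is consumed at step $i$, whence $o^* \in u_i$), then $o^* \notin u_k$ for every $k > i$. The intuition is genealogical: by (ii)--(iii) every non-initial output reference $(b,n)$ has a unique creating transaction $t$ with $h(t)=b$, whose own inputs were present --- hence created strictly earlier --- when it ran; by well-foundedness of $\UTxO_0$ and fact (i) this ancestry bottoms out only at initial outputs, which come from input-free transactions and can therefore never be re-applied in a valid trace. Resurrecting a consumed $o^*$ at a step $m>i$ would require re-running its unique creator, demanding that the creator's inputs be present again at step $m$, which forces a cycle in this well-founded genealogy. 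I would formalize this by taking a counterexample with $i$ minimal: locating the re-appearance of $o^*$ at a step $m$ with $o^* \in c_m$ gives $h(t_m) = b$, and tracing the origin of $o^* \in u_i$ yields either the initial state --- contradicting well-foundedness via injectivity and (i) --- or an earlier creation at some $\ell < i$, whence injectivity forces $t_\ell = t_m$, a replay whose consumed inputs give a resurrection with first index $\ell < i$, contradicting minimality. Closing this induction without circularity --- in particular the corner case $o^* \in r_i \cap c_i$, where a transaction would consume one of its own outputs --- is the main obstacle, and is exactly where injectivity and well-foundedness are indispensable.

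Granting the lemma, both parts are immediate. For (a), if $t_i = t_j$ with $i < j$ then $r_i = r_j \neq \varnothing$; any $o^* \in r_i$ is consumed at step $i$, so $o^* \notin u_j$ by the lemma, contradicting the fact that $o^* \in r_j$ must lie in $u_j$ by $\fun{checkTx}$. For (b), if $u_i = u_j$ with $i<j$, pick $o^* \in r_i \neq \varnothing$; then $o^* \in u_i$ but $o^* \notin u_j$ by the lemma, contradicting $u_i = u_j$.

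Finally, for the safety claims I would appeal to the characterization recalled earlier, namely that $P$ is a safety property iff its complement is witnessed by finite prefixes, equivalently iff $P$ is closed in the ultrametric $d$. A violation of (a) is witnessed by a finite prefix exhibiting two equal transactions $t_i = t_j$ ($i<j$), a violation of (b) by a finite prefix with $u_i = u_j$; any infinite extension of such a prefix still violates the property, so each complement satisfies $\vec{s}\notin P \Leftrightarrow (\exists\, n\geq 0)(\forall\, \vec{t})\{(s_0,\ldots,s_n,t_0,t_1,\ldots)\notin P\}$. Thus each $P$ is closed --- a safety property (of the $\Lambda$-paths for (a), and of the $\UTxO$-trace for (b)) --- and, by the two contradictions above, one that is moreover satisfied by every valid trace in $\Trc(\LEDGER)$.
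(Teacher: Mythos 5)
Your proposal is correct, but it is organized around a different key lemma than the paper's. The paper proves (a) directly by a minimal-counterexample argument on the replay pair $(i,j)$: chaining the update rule gives $r_i \subseteq u_0 \cup c_0 \cup \ldots \cup c_{i-1}$ and $r_j \subseteq c_i \cup \ldots \cup c_{j-1}$, and two disjointness claims --- $u_0 \cap c_l = \varnothing$ (well-foundedness plus injectivity of $h$) and $c_l \cap c_m = \varnothing$ for $l < i \leq m$ (injectivity plus minimality of the pair) --- force $r_i = r_j = \varnothing$, contradicting the nonempty-inputs requirement of $\fun{checkTx}$; part (b) is then deduced from (a) and the first claim by showing $c_{j-1}$ must be disjoint from a union containing $u_i$. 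You instead factor both parts through a single ``no-resurrection'' invariant (a spent reference never reappears), proved by minimal counterexample from the same three ingredients; your sketch does close, since the self-consumption corner case $o^* \in r_i \cap c_i$ traces, via injectivity of $h$, either back to $u_0$ (contradicting well-foundedness) or to an earlier creation, producing a violation with smaller first index. Your route buys a reusable invariant strictly stronger than (a) and (b) together (essentially no-double-spend linearity), and a derivation of (b) that is insensitive to whether $c_{j-1} = \varnothing$ --- a case where the paper's concluding disjointness step in (b) is vacuous, since transactions with empty output lists are not excluded by $\fun{checkTx}$. For safety, the paper's Claim 3 just notes both properties hold on all of $\Trc(\LEDGER)$, hence carve out the full, closed subspace; your finite-prefix-witness argument is the one the paper states only informally in the remark following the theorem, where the properties are considered over arbitrary traces, and both are valid.
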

\begin{proof}
    (a) Suppose $i<j$ is the minimal pair such that $t_i = t_j.$ Then 
$
    r_i \subseteq u_i = (u_{i-1} \setminus r_{i-1}) \cup c_{i-1} \subseteq u_{i-1} \cup c_{i-1} \subseteq u_0 \cup c_0 \cup c_1 \cup \ldots \cup c_{i-1}.
$
A similar chain of inclusions yields
$
    r_j ~\subseteq~ 
    u_j ~\subseteq~
    (u_i \setminus r_i) \cup 
    c_i \cup \ldots \cup c_{j-1}.~
$
Since $r_i = r_j,$ the set $r_j$ has no common elements with $u_i \setminus r_i,$ so 
$
    r_j ~\subseteq~ 
    c_i \cup \ldots \cup c_{j-1}
$
and 
$
    r_i \cap r_j ~\subseteq~
    \left(
        u_0 \cup c_0 \cup \ldots \cup c_{i-1}
    \right)
    \cap 
    \left(
        c_i \cup \ldots \cup c_{j-1}
    \right).
$

\textbf{Claim 1.} For any $l\geq 0:$ $u_0 \cap c_l = \varnothing.$ Suppose $(b,n,o)\in u_0 \cap c_l,$ for some $l\geq 0.$ Then $b$ is a hash of a transaction $\mathring{t}$ with an empty input, by the well-foundedness of $\UTxO.$ On the other hand, $b$ is a hash of a transaction $t_l$ with a non-empty input, since $(q_l,u_l,t_l)$ satisfies $\fun{checkTx}.$ Hence, $\mathring{t} = t_l,$ by the injectivity of $h,$ which is a contradiction, as their inputs differ. So, $u_0 \cap c_l = \varnothing.$

\textbf{Claim 2.} For any $0\leq l \leq i-1$ and $i \leq m \leq j-1:$  $c_l \cap c_m = \varnothing.$ Hash components of all elements in $c_l$ and $c_m$ are $h(t_l)$ and $h(t_m),$ respectively. If these sets share an element, hash of this element is $h(t_l) = h(t_m).$ From the injectivity of $h$ follows that $t_l = t_m.$ The latter equality contradicts the assumption about the minimality of the pair $(i,j).$ Therefore, $c_l \cap c_m = \varnothing.$

Claims 1 and 2 imply $r_i \cap r_j = \varnothing,$ which means that $r_i = r_j = \varnothing.$ Hence, $\inputs(t_i) = \inputs(t_j) = \varnothing,$ that contradicts the fact that $(q_i,u_i,t_i)$ and $(q_j,u_j,t_j)$ satisfy $\fun{checkTx}.$

(b) Suppose there is a pair $i < j$ such that $u_i = u_j.$ An equality of sets 
$
    u_j
    =
    (u_{j-1}\setminus r_{j-1}) \cup c_{j-1}    
$
imply that 
$
    c_{j-1} ~\subseteq~
    u_{i} ~\subseteq~
    u_{i-1} \cup c_{i-1} ~\subseteq~
    u_{i-2} \cup c_{i-2} \cup c_{i-1} ~\subseteq~
    \ldots
    u_0 \cup c_0 \cup \ldots \cup c_{i-1}
$
By Claim 1 $u_{0}\cap c_{j-1} = \varnothing.$ If $c_{j}\cap c_{l} \neq \varnothing,$ where $0 \leq l \leq i-1,$ then $t_j = t_l,$ which is impossible by part (a). Hence, 
$
    (u_0 \cup c_0 \cup \ldots \cup c_{i-1}) \cap c_{j-1} = \varnothing
$  
and $u_i \neq u_j.$

\textbf{Claim 3.} Because the full space (which is always closed) satisfies this property, this property represents a closed set, and is therefore a safety property.
\end{proof}

An important corollary of the above theorem is that sets we add to or delete from a given ledger state (UTxO set) are pairwise disjoint. This will be used in an upcoming result.

\begin{corollary}
    \label{cor:injective}
If $h:\Tx\to\ByteString$ is injective, then: (i) sets $u_0,~ c_0,~ c_1,~ \ldots$ are pairwise disjoint; (ii) sets $r_0,r_1,\ldots$ are pairwise disjoint. Moreover, 
$
    u_{k+1} = (u_k \setminus r_k) \sqcup c_k    
$
for any $k\leq 0.$
\end{corollary}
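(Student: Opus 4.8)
The plan is to derive this corollary almost entirely from the machinery already assembled in the preceding theorem, since replay protection (part~(a)) together with the injectivity of $h$ is precisely what forces the relevant sets apart. Before starting I would record the bookkeeping the argument rests on: a consumed input $r_i = \fun{getORefs}(t_i)$ is a set of references (keys), but via $\fun{checkTx}$ it is identified with the set of entries of $u_i$ whose keys it names, so that an inclusion such as $r_i \subseteq u_i$ makes sense at the level of entries. I would then prove part~(i) first, as both remaining claims depend on it.

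For (i), the disjointness $u_0 \cap c_l = \varnothing$ for every $l$ is exactly Claim~1 of the theorem and can be cited verbatim. For $c_l \cap c_m = \varnothing$ with $l \neq m$, I would run the argument of Claim~2 in full generality: every entry of $c_l$ has hash component $h(t_l)$ and every entry of $c_m$ has hash component $h(t_m)$, so a common entry would give $h(t_l) = h(t_m)$, hence $t_l = t_m$ by injectivity of $h$, contradicting replay protection since $l \neq m$. This shows $u_0, c_0, c_1, \ldots$ are pairwise disjoint.

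For (ii), fix $i < j$ and unfold the recurrence $u_{k+1} = (u_k \setminus r_k) \cup c_k$ upward from step $i$ to obtain $u_j \subseteq (u_i \setminus r_i) \cup c_i \cup \cdots \cup c_{j-1}$, exactly as in the theorem; since $\fun{checkTx}$ places the entries named by $r_j$ inside $u_j$, this yields $r_j \subseteq (u_i \setminus r_i) \cup c_i \cup \cdots \cup c_{j-1}$. Intersecting with $r_i$: the piece $r_i \cap (u_i \setminus r_i)$ is empty because the entries named by $r_i$ are precisely those deleted from $u_i$; and $r_i \cap (c_i \cup \cdots \cup c_{j-1})$ is empty because the same unfolding run down to $0$ gives $r_i \subseteq u_i \subseteq u_0 \cup c_0 \cup \cdots \cup c_{i-1}$, which is disjoint from each $c_i,\ldots,c_{j-1}$ by (i). Hence $r_i \cap r_j = \varnothing$. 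For the ``moreover'' clause, the inclusion $u_k \subseteq u_0 \cup c_0 \cup \cdots \cup c_{k-1}$ together with (i) gives $u_k \cap c_k = \varnothing$, so $(u_k \setminus r_k) \cap c_k = \varnothing$ and the union $u_{k+1} = (u_k \setminus r_k) \cup c_k$ is genuinely disjoint, justifying the $\sqcup$ notation.

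The one delicate point, and the step I would be most careful about, is the passage between output references $(b,n)$ and full entries $((b,n),o)$: the chain argument proves \emph{entry}-disjointness of the $r_i$, whereas (ii) asserts \emph{key}-disjointness. To close this gap I would show that a key determines its value globally, i.e. if $((b,n),o)\in u_i$ and $((b,n),o')\in u_j$ then $o=o'$, by tracing both entries back to their unique sources among $u_0,c_0,c_1,\ldots$ (via (i)) and using injectivity of $h$ together with well-foundedness to rule out the mixed $u_0$-versus-$c_m$ case. With this identification in place the two notions of disjointness coincide and the chain argument is rigorous.
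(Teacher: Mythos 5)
Your proposal is correct and follows exactly the derivation the paper intends for this corollary (which it states without a separate proof): part (i) is Claim~1 plus Claim~2 of the replay-protection theorem, with the minimality hypothesis in Claim~2 replaced by the now-available part~(a), and part (ii) together with the disjointness of the union in $u_{k+1}=(u_k\setminus r_k)\sqcup c_k$ comes from the same unfolding of the recurrence used in that theorem's proof. Your closing lemma about output references versus full entries is sound extra rigor, but strictly speaking the gap is already closed in the paper: Claims~1 and~2 derive their contradictions from a shared \emph{hash component} alone, so they already yield disjointness at the level of keys, which makes the auxiliary ``key determines value'' argument unnecessary (though harmless).
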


We note here that the above properties would also be safety properties when considered as properties of arbitrary (not necessarily valid) traces. This can be justified as follows: any trace containing a prefix such that for $i\neq j$, transactions $t_i = t_j$ will never satisfy the replay protection property, regardless of its suffix. The "bad thing" cannot be fixed. Similarly, a trace containing a trivial update at states $u_i = u_j$ cannot be "fixed" by any suffix.  
\subsection{UTxO transaction commutativity}

The $\UTxO$ set in the $\LEDGER$ transition system enjoys the transaction commutativity property meaning that \textit{the order of applying transactions to a valid initial state is irrelevant and always returns the same result.} Still we should keep in mind that every single transaction application must be validated by $\fun{checkTx}.$ While this property has to do with finite sequences of states and transactions, we can express it a property of (infinite) execution traces.

\begin{theorem}[UTxO transaction commutativity]
    Let $u_0$ is a well-founded UTxO state. Suppose we have two traces with (possibly) distinct length-$n+1$ prefixes, and an arbitrary suffix,

\vspace{-4mm}    
\[\begin{tikzcd}[ampersand replacement=\&,column sep=2.25em,row sep=tiny]
	{(q_0,u_0,t_0)} \& {(q_1,u_1,t_1)} \& \ldots \& {(q_n,u_n,t_n)} \& {s_1} \& {s_2} \& \ldots  \\
	{(q'_0,u'_0,t'_0)} \& {(q'_1,u'_1,t'_1)} \& \ldots \& {(q'_n,u'_n,t'_n)} \& {s_1'} \& {s_2'} \& \ldots
	\arrow[from=1-1, to=1-2]
	\arrow[from=1-2, to=1-3]
	\arrow[from=1-3, to=1-4]
	\arrow[from=1-4, to=1-5]
    \arrow[from=1-5, to=1-6]
    \arrow[from=1-6, to=1-7]
	\arrow[from=2-1, to=2-2]
	\arrow[from=2-2, to=2-3]
	\arrow[from=2-3, to=2-4]
    \arrow[from=2-4, to=2-5]
    \arrow[from=2-5, to=2-6]
    \arrow[from=2-6, to=2-7]
\end{tikzcd}
\vspace{-3mm}
\]
in the simple graph corresponding to $\LEDGER,$ where $u_0=u'_0\in\UTxO_0$ and $(t'_0,\ldots,t'_n)$ is a permutation of $(t_0,\ldots,t_n).$ If $h$ is injective, then $u_n = u'_n.$ 
\end{theorem}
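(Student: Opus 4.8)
The plan is to derive an explicit, order-independent closed form for the UTxO state reached after applying a finite sequence of transactions, and then to observe that this form depends only on the collection of transactions applied, not on the order in which they occur.

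First I would invoke Corollary \ref{cor:injective}: since $h$ is injective and $u_0 \in \UTxO_0$ is well-founded, for each of the two rows the added blocks $u_0, c_0, c_1, \ldots$ are pairwise disjoint, the removed blocks $r_0, r_1, \ldots$ are pairwise disjoint, and each individual step is a genuine disjoint union $u_{k+1} = (u_k \setminus r_k) \sqcup c_k$, where $r_k = \fun{getORefs}(t_k)$ and $c_k = \fun{mkOuts}(t_k)$. The decisive point, recorded here for later use, is that each $r_k$ and $c_k$ is a function of the single transaction $t_k$ alone, and is insensitive both to the state $t_k$ is applied to and to the ambient ordering.

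The key step is to establish, by induction on $k$, the closed form
\[
    u_k = \left( u_0 \cup c_0 \cup \ldots \cup c_{k-1} \right) \setminus \left( r_0 \cup \ldots \cup r_{k-1} \right).
\]
The base case is the definition $u_1 = (u_0 \setminus r_0) \sqcup c_0$ rewritten using $c_0 \cap r_0 = \varnothing$ (which holds because $r_0 \subseteq u_0$ while $c_0$ is disjoint from $u_0$). For the inductive step I substitute the hypothesis into $u_{k+1} = (u_k \setminus r_k) \sqcup c_k$ and push the freshly added block $c_k$ outside the difference. The crux is the disjointness $c_k \cap (r_0 \cup \ldots \cup r_k) = \varnothing$: every $r_m$ with $m \leq k$ satisfies $r_m \subseteq u_m \subseteq u_0 \cup c_0 \cup \ldots \cup c_{m-1}$, and all of these blocks carry an index strictly below $k$, so by the pairwise disjointness from the Corollary they miss $c_k$. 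This is exactly what licenses reordering additions past deletions.

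Finally, I would read off the conclusion from the closed form. It exhibits the reached state as a function of $u_0$ together with the \emph{unordered} collection of transactions consumed along the prefix, since each contributing block $c_i$, $r_i$ is pinned down by its transaction $t_i$. As $(t'_0,\ldots,t'_n)$ is a permutation of $(t_0,\ldots,t_n)$, the two prefixes consume the same transactions, so the two closed forms are literally the same unions and differences of the same blocks, yielding $u_n = u'_n$. I expect the main obstacle to be the inductive set-algebra step, and within it the verification that each added block $c_k$ is disjoint from all removal sets up through its own step; this is precisely where injectivity of $h$ and well-foundedness of $\UTxO$ are indispensable, since without them a newly created output could coincide with a previously present or previously removed entry and the order of operations would no longer be immaterial.
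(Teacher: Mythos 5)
Your proof is correct, and it pivots on the same object as the paper's: the order-independent closed form $u = \left(u_0 \cup c_0 \cup \ldots \cup c_n\right) \setminus \left(r_0 \cup \ldots \cup r_n\right)$, with injectivity of $h$ and well-foundedness entering only through Corollary \ref{cor:injective}. Where you diverge is in how that identity is justified. The paper asserts $u_n = u_{\cup} \setminus r = u'_n$ and verifies it by an element-wise contradiction: if some entry $a$ lies in one final state but not the other, it must have been removed and later re-added, and the re-adding transaction would have to share a hash with the transaction that originally created $a$, contradicting replay protection and Corollary \ref{cor:injective}. You instead prove the closed form directly by induction on the prefix length, the crux being the disjointness $c_k \cap (r_0 \cup \ldots \cup r_k) = \varnothing$, which you correctly derive from $r_m \subseteq u_m \subseteq u_0 \cup c_0 \cup \ldots \cup c_{m-1}$ together with pairwise disjointness of the blocks $u_0, c_0, c_1, \ldots$; that is exactly what licenses pushing each freshly attached $c_k$ outside the accumulated set difference. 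Your route is arguably tighter: it makes order-independence manifest (the closed form is symmetric in the transactions), and it replaces the paper's somewhat informal chase (``removed by some $t_i$\ldots re-added by another $t_k$'') with a checkable set-algebra step; the paper's contradiction argument, in exchange, is shorter and treats the two rows symmetrically without any induction. One blemish you share with the paper: since the vertex $(q_n,u_n,t_n)$ must satisfy $\fun{checkTx}$, the state $u_n$ is the state \emph{before} $t_n$ is applied, so it depends only on $t_0,\ldots,t_{n-1}$, which need not be a permutation of $t'_0,\ldots,t'_{n-1}$ --- the literal conclusion $u_n = u'_n$ already fails for two independent transactions swapped at $n=1$. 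Your closing sentence (like the paper's proof, whose $u_{\cup}$ and $r$ range over all of $t_0,\ldots,t_n$) silently uses the full permutation; your closed form evaluated at $k = n+1$, i.e., for the states carried inside $s_1$ and $s'_1$, is exactly the corrected statement, so the induction you set up proves the theorem as intended.
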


Before we prove the above result, we motivate the proof and illustrate this property with an example:

\begin{example}
    Let $(t_0,t_1,t_2,t_3,t_4,t_5,t_6,t_7)$ and $(t_3,t_1,t_6,t_2,t_5,t_7,t_0,t_4)$ be valid sequences of transactions that are applied to a valid initial state $u_0\in\UTxO_0.$ Then 
\begin{center}
\begin{tabular}{rclcl|rclcl}
    $u_1$ & $=$ & $(u_0 \setminus r_0) \sqcup c_0$ & $\subseteq$ & $u_0\sqcup c_0$ 
    &
    $u'_1$ & $=$ & $(u_0 \setminus r_3) \sqcup c_3$ & $\subseteq$ & $u_0\sqcup c_3$ 
    \\
    $u_2$ & $=$ & $(u_1 \setminus r_1) \sqcup c_1$ & $\subseteq$ & $u_0\sqcup c_{01}$
    &
    $u_2$ & $=$ & $(u'_1 \setminus r_1) \sqcup c_1$ & $\subseteq$ & $u_0\sqcup c_{13}$ 
    \\
    $u_3$ & $=$ & $(u_2 \setminus r_2) \sqcup c_2$ & $\subseteq$ & $u_0\sqcup c_{012}$ 
    &
    $u'_3$ & $=$ & $(u'_2 \setminus r_6) \sqcup c_6$ & $\subseteq$ & $u_0\sqcup c_{136}$
    \\
    $u_4$ & $=$ & $(u_3 \setminus r_3) \sqcup c_3$ & $\subseteq$ & $u_0\sqcup c_{0123}$
    &
    $u'_4$ & $=$ & $(u'_3 \setminus r_2) \sqcup c_2$ & $\subseteq$ & $u_0\sqcup c_{1236}$ 
    \\
    $u_5$ & $=$ & $(u_4 \setminus r_4) \sqcup c_4$ & $\subseteq$ & $u_0\sqcup c_{01234}$ 
    &
    $u'_5$ & $=$ & $(u'_4 \setminus r_5) \sqcup c_5$ & $\subseteq$ & $u_0\sqcup c_{12356}$ 
    \\
    $u_6$ & $=$ & $(u_5 \setminus r_5) \sqcup c_5$ & $\subseteq$ & $u_0\sqcup c_{012345}$
    &
    $u'_6$ & $=$ & $(u'_5 \setminus r_7) \sqcup c_7$ & $\subseteq$ & $u_0\sqcup c_{123567}$
    \\
    $u_7$ & $=$ & $(u_6 \setminus r_6) \sqcup c_6$ & $\subseteq$ & $u_0\sqcup c_{0123456}$
    &
    $u'_7$ & $=$ & $(u'_6 \setminus r_0) \sqcup c_0$ & $\subseteq$ & $u_0\sqcup c_{0123567}$
    \\
    $u_8$ & $=$ & $(u_7 \setminus r_7) \sqcup c_7$ & $\subseteq$ & $u_0\sqcup c_{01234567}$
    &
    $u'_8$ & $=$ & $(u'_7 \setminus r_4) \sqcup c_4$ & $\subseteq$ & $u_0\sqcup c_{01234567}$
    \\
\end{tabular}
\end{center}
where $c_{i_1i_2\ldots i_k} = \cup_{j=1}^{k}c_{i_j}.$ Since $c_A \cap c_B = c_{A\cap B}$ for any subsets $A,B$ of $\{0,\ldots,7\},$ 
\begin{center}
    \begin{tabular}{rclcl}
        $r_0$ & $\subseteq$ & $u_0 \cap (u_0 \sqcup c_{123567})$ & $=$ & $u_0$ \\ 
        $r_1$ & $\subseteq$ & $(u_0 \sqcup c_0) \cap (u_0 \sqcup c_{3})$ & $=$ & $u_0$ \\ 
        $r_2$ & $\subseteq$ & $(u_0 \sqcup c_{01}) \cap (u_0 \sqcup c_{136})$ & $=$ & $u_0 \sqcup c_1$ \\ 
        $r_3$ & $\subseteq$ & $(u_0 \sqcup c_{012}) \cap u_0$ & $=$ & $u_0$ \\ 
        $r_4$ & $\subseteq$ & $(u_0 \sqcup c_{0123}) \cap (u_0 \sqcup c_{0123567})$ & $=$ & $u_0 \sqcup c_{0123}$ \\ 
        $r_5$ & $\subseteq$ & $(u_0 \sqcup c_{01234}) \cap (u_0 \sqcup c_{1236})$ & $=$ & $u_0 \sqcup c_{123}$ \\ 
        $r_6$ & $\subseteq$ & $(u_0 \sqcup c_{012345}) \cap (u_0 \sqcup c_{13})$ & $=$ & $u_0 \sqcup c_{13}$ \\
        $r_6$ & $\subseteq$ & $(u_0 \sqcup c_{012345}) \cap (u_0 \sqcup c_{13})$ & $=$ & $u_0 \sqcup c_{13}$ \\ 
        $r_7$ & $\subseteq$ & $(u_0 \sqcup c_{0123456}) \cap (u_0 \sqcup c_{12356})$ & $=$ & $u_0 \sqcup c_{12356}$ \\
    \end{tabular}
\end{center}
The above inclusions yield: 

\vspace{-2mm}

\begin{itemize}
    \item[(0)] subsets $r_0,r_1,r_3$ are subtracted from $u_0$ in any order, then $c_0,c_1,c_3$ are attached; \vspace{-2mm}
    \item[(1)] subsets $r_2,r_6$ are subtracted from the result of (0) in any order, then $c_2,c_6$ are attached; \vspace{-2mm}
    \item[(2)] subsets $r_4,r_5$ are subtracted from the result of (1) in any order, then $c_4,c_5$ are attached; \vspace{-2mm}
    \item[(3)] subset $r_7$ is subtracted from the result of (2), then $c_7$ is attached. \vspace{-2mm}
\end{itemize}
Hence, the set $\{0,\ldots,7\}$ is equipped with a partial order structure: $i<j$ if the subtraction of $r_i$ can potentially depend on attaching of $c_j.$ In the case of our example, the Hasse diagram of a poset is
\[\begin{tikzcd}[ampersand replacement=\&,row sep=small]
	0 \& 3 \& 1 \\
	\& 2 \& 6 \\
	4 \&\& 5 \\
	\& 7
	\arrow[from=3-1, to=1-1]
	\arrow[from=3-1, to=2-2]
	\arrow[from=2-2, to=1-3]
	\arrow[from=3-3, to=1-2]
	\arrow[from=2-3, to=1-2]
	\arrow[from=2-3, to=1-3]
	\arrow[from=3-3, to=2-2]
	\arrow[from=3-1, to=1-2]
	\arrow[from=4-2, to=3-3]
	\arrow[from=4-2, to=2-3]
\end{tikzcd}\]
Vertices at the top of the diagram (sinks) are \textit{level 0} vertices. A level of a vertex $i$ is a length of a maximal path from $i$ to vertices of level 0. Hence, we obtain a level partition of the set of vertices: (0) level 0 vertices: 0,1,3; (1) level 1 vertices: 2,6; (2) level 2 vertices: 4,5; (3) level 3 vertex: 7.   

From our construction follows that transactions labeled by vertices of the same level commute, and a pair $(t_i,t_j)$ can be swapped to $(t_j,t_i),$ if the level of $i$ is less than the level of $j.$ 

Fixing the ordering between the initial set of transactions, we obtain a canonical presentation of the set of transactions
$
    (0,1,3 ~|~ 2,6 ~|~ 4,5 ~|~ 7)    
$  
and the canonical path, where $w_0 = u_0$:
\[
    (w_0, t_0) \to
    (w_1, t_1) \to
    (w_2, t_3) \to
    (w_3, t_2) \to
    (w_4, t_6) \to
    (w_5, t_4) \to
    (w_6, t_5) \to
    (w_7, t_7) 
\]

Finally, we present a way how to transform one valid transaction sequence into another. Indeces of numbers represent their levels, adjacent red numbers are swapped in the next row.
\begin{center}
\begin{tabular}{ccccccccccccccccc}
    $(0_0$ & $1_0$ & \red{$2_1$} & \red{$3_0$} & $4_2$ & \red{$5_2$} & \red{$6_1$} & $7_3)$ &
    $~~$ & 
    $($\red{$3_0$} & \red{$1_0$} & $6_1$ & $2_1$ & 
    $5_2$ & \red{$7_3$} & \red{$0_0$} & $4_2)$ 
    \\
    $(0_0$ & $1_0$ & $3_0$ & $2_1$ & \red{$4_2$} & \red{$6_1$} & {$5_2$} & $7_3)$ &
    $~~$ & 
    $(${$1_0$} & {$3_0$} & \red{$6_1$} & \red{$2_1$} & 
    $5_2$ & {$0_0$} & \red{$7_3$} & \red{$4_2$}$)$ 
    \\
    $(0_0$ & $1_0$ & $3_0$ & $2_1$ & {$6_1$} & {$4_2$} & {$5_2$} & $7_3)$ &
    $~~$ & 
    $(${$1_0$} & {$3_0$} & {$2_1$} & {$6_1$} & 
    \red{$5_2$} & \red{$0_0$} & {$4_2$} & {$7_3$}$)$ 
    \\
    $~$ & $~$ & $~$ & $~$ & {$~$} & {$~$} & {$~$} & $~$ &
    $~~$ & 
    $(${$1_0$} & {$3_0$} & {$2_1$} & \red{$6_1$} & 
    \red{$0_0$} & {$5_2$} & {$4_2$} & {$7_3$}$)$
    \\
    $~$ & $~$ & $~$ & $~$ & {$~$} & {$~$} & {$~$} & $~$ &
    $~~$ & 
    $(${$1_0$} & {$3_0$} & \red{$2_1$} & \red{$0_0$} & 
    {$6_1$} & \red{$5_2$} & \red{$4_2$} & {$7_3$}$)$
    \\
    $~$ & $~$ & $~$ & $~$ & {$~$} & {$~$} & {$~$} & $~$ &
    $~~$ & 
    $(${$1_0$} & \red{$3_0$} & \red{$0_0$} & {$2_1$} & 
    {$6_1$} & {$4_2$} & {$5_2$} & {$7_3$}$)$
    \\
    $~$ & $~$ & $~$ & $~$ & {$~$} & {$~$} & {$~$} & $~$ &
    $~~$ & 
    $($\red{$1_0$} & \red{$0_0$} & {$3_0$} & {$2_1$} & 
    {$6_1$} & {$4_2$} & {$5_2$} & {$7_3$}$)$
    \\
    $~$ & $~$ & $~$ & $~$ & {$~$} & {$~$} & {$~$} & $~$ &
    $~~$ & 
    $(${$0_0$} & {$1_0$} & {$3_0$} & {$2_1$} & 
    {$6_1$} & {$4_2$} & {$5_2$} & {$7_3$}$)$
    \\ 
\end{tabular}
\end{center}

\end{example}

The more valid permutations of the set of transactions we have, more precisely we'll define the canonical form.     
We can distill the approach taken in the example into a proof:

\begin{proof}
    Let 
\[\begin{tikzcd}[ampersand replacement=\&,column sep=2.25em,row sep=tiny]
	{(q_0,u_0,t_0)} \& {(q_1,u_1,t_1)} \& \ldots \& {(q_n,u_n,t_n)} \& {s_1} \& {s_2} \& \ldots  \\
	{(q'_0,u'_0,t'_0)} \& {(q'_1,u'_1,t'_1)} \& \ldots \& {(q'_n,u'_n,t'_n)} \& {s_1'} \& {s_2'} \& \ldots
	\arrow[from=1-1, to=1-2]
	\arrow[from=1-2, to=1-3]
	\arrow[from=1-3, to=1-4]
	\arrow[from=1-4, to=1-5]
    \arrow[from=1-5, to=1-6]
    \arrow[from=1-6, to=1-7]
	\arrow[from=2-1, to=2-2]
	\arrow[from=2-2, to=2-3]
	\arrow[from=2-3, to=2-4]
    \arrow[from=2-4, to=2-5]
    \arrow[from=2-5, to=2-6]
    \arrow[from=2-6, to=2-7]
\end{tikzcd}
\vspace{-3mm}
\]
be two traces in the simple graph corresponding to $\LEDGER,$ where $u_0=u'_0\in\UTxO_0$ and $(t'_0,\ldots,t'_n)$ is a permutation of $(t_0,\ldots,t_n).$ Let us define sets $r_i = \fun{getORefs}(t_i)$ and $c_i = \fun{mkOuts}(t_i)$ for $0\leq i\leq n.$ Also, let $u_{\cup} = u_0 \cup_{0\leq i\leq n} c_i$ be the result of adding all outputs from all transactions $t_0, ... , t_n$ to $u_0$, which is also the same as adding all outputs from all transactions $t_0', ... , t_n'$ to $u_0$. The final states $u_n$ and $u_n'$, as well as all intermediate states, are contained in $u_{\cup}$, since the only other operation involved in updating the state is removing UTxO entries. 

Set $r = \cup_{0\leq i\leq n} r_i$ consists of UTxO entries that $t_0, ... , t_n$ (or $t_0', ... , t_n'$) remove from $u_{\cup}$. This $r$ is such that $r \subseteq u_{\cup}$, which is guaranteed by $\fun{checkTx}$. To show that both $u_n = u_{\cup} \setminus~r = u_n'$, we proceed by contradiction. Suppose $a \notin u_n$, but $a \in u_n'$. By above, $a \in u_{\cup}$. Now, $a$ must have been removed by some $t_i$ from $u_{\cup}$ (and therefore, also some $t_j'$). Since $a$ is in $u_n'$, it must have been re-added by another $t_k$. From the replay protection property, we get a contradiction: $t_k$ must have the same encoding as a previous transaction that added $a$, which contradicts Corollary \ref{cor:injective}. By similar logic, we can show that any $a \in u_n$ must also be in $u_n'$.
\end{proof}

The following algorithm produces a canonical form of a transaction list:

\begin{algorithm}[Canonical form of transaction sequence]
    
\textbf{Input data}. A finite path $\{(q_i,u_i,t_i)\}_{i=0}^{n}$ in a graph corresponding to $\LEDGER,$ where $u_0\in\UTxO_0.$ 

\textbf{Output data}. A set of all possible finite paths $\{(q'_i,u'_i,t'_i)\}_{j=0}^{n},$ where $u'_0 = u_0$ and $(t'_0,\ldots,t'_n)$ is a permutation of $(t_0,\ldots,t_n).$

\textbf{Step 1}. For every $i$ determine the set 
$
    K_i := 
    \left\{
        j
        ~|~
        r_i \cap c_j \neq \varnothing
    \right\}    .
$

\textbf{Step 2}. Define a poset structure on $\{0,\ldots,n\}:$ $i<j$ if and only if $j\in K_i.$

\textbf{Step 3}. Say that an index $i$ has level 0, if $K_i = \varnothing.$ If $K_i\neq \varnothing,$ say that a level of $i$ is a maximum of path lengths from $i$ to indexes of level 0 in Hasse diagram of the poset on $\{0,\ldots,n\}.$

\textbf{Step 4}. Make a total order on the set $\{0,\ldots,n\}:$ $i<j$ if and only if the level of $i$ is less that the level of $j,$ or $i<j$ in the natural ordering, if $i$ and $j$ have the same level. Call a sequence representing this total order
$
    (t_{i_0},\ldots,t_{i_n})    
$
the \textit{canonical presentation} of $(t_1,\ldots,t_n).$ 

\textbf{Step 5}. An \textit{elementary swap} of $(t_{j_0},\ldots,t_{j_k},\ldots,t_{j_l},\ldots,t_{j_n})$ is a sequence $(t_{j_0},\ldots,t_{j_l},\ldots,t_{j_k},\ldots,t_{j_n})$ if $l<k$ in a total order. Return as an output a set of all sequences
$
    (t_{j_0},\ldots,t_{j_n})
$
that can be obtained from the canonical presentation by a finite number of elementary swaps.
\end{algorithm}

Note that the decision procedure has the input of a single transaction list, while the transaction commutativity theorem is formulated in terms of two lists that are permutations of each other. It is more natural to express this theorem in a way that is symmetric with respect to the two permutations of transaction lists. To state it as a property, we can rephrase it as a property of a single trace: the property holds for a given trace with the required structure whenever, given any other trace with the required prefix and suffix, the final state of the prefix must be the same for both traces.

Like in the case of replay protection as well as trivial update protection, this is a safety property by virtue of being true for the entire space. This is also a safety property when considered as a property of all possible UTxO state traces (not just the valid ones). Suppose $s$ is a trace of UTxO states that has a $n$-length prefix that is not generated by applying a valid lists of transactions, and an arbitrary suffix. If another trace $s'$ exists such that $s$ and $s'$ violate the transaction commutativity property, changing the suffix starting at $n+1$ position in the trace of $s$ will never "fix" $s$. So, when the transaction commutativity property breaks in a finite prefix, it cannot be remedied, as required by the definition of a safety property.


\section{Conclusion}

\subsection{Related Work}
In this work, we use tools from different areas of mathematics to study program (i.e., 
smart contract) executions on the ledger. The approach to representing ledger semantics 
is used in existing work in UTxO ledger and contract formalization \cite{eutxo} \cite{alonzo} \cite{structured}.
Transition systems are commonly represented by graphs, with edges corresponding to possible state transitions \cite{milner}.
However, the graph generated directly by the small-steps semantics of ledgers (or other programs) 
does not contain edges for multi-step transitions, and it also does not exclude "bad" starting 
states. For this reason, we instead consider possible paths in the resulting graphs, and limit our 
attention to only paths with certain "good" starting states. 
Possible paths in our valid transition graphs align with the notion of execution traces \cite{liveness},
however, they are generated according to the small-steps specifications, and are therefore not arbitrary. 

The definition of ledger implementation relation, which we build on
this work, is a reversal of the the classic \emph{simulation} relation \cite{milner}. 
We introduce this terminology in this way because it is descriptive of the unique architecture of programs 
running on the UTxO ledger, which is different from the traditional communicating automata-style 
distributed programs. This is because stateful UTxO programs are implemented using multiple 
permission-like pieces of code that control what aggregates (or pieces) of the ledger state that
transactions are permitted to control. It also does not make sense to model such programs via  
a notion that is used in studying transition systems --- subsystems \cite{universal}. This is because 
a subsystem relates the evolution of a subset of states to the evolution of the entire system, 
which is not what we study here. 

Rather than subsystems, we use sieve-defined homomorphisms as the formalism 
that relates the evolution of the implemented program state to the that of the ledger state.
In category theory, a sieve is a is a generalization of the notion of ideals, which guarantees 
that arrows starting in a given set of objects also end in that same set \cite{sieves}.
Now, categories are closed under arrow composition, and our graphs are not closed under 
edge composition. We, however, apply the defining property of categorical sieves to graph
homomorphisms.

Algebraic descriptions of UTxO transaction processing appear in existing work \cite{blockalg}. 
However, no models of
implementations of programs on the ledger are described within this model. The ledger structure itself has been 
modelled categorically, using monoidal functors to represent the structure of UTxO set updates. 
It may be possible to combine this model and ours in future work. 

Certain safety properties have been defined, and are consistently being checked in
production ledger systems as part of property testing
frameworks. This involves generating arbitrary transactions, and applying them to 
generated states to verify that the property is not violated \footnote{%
\url{https://github.com/IntersectMBO/cardano-ledger}}. In the future, we also plan 
to incorporate the results in this work into more realistic models. 

\subsection{Discussion}
Programs for UTxO-style ledgers, which are composed entirely out of predicates on state updates,
differs significantly from common programming paradigms. For this reason, a specialized model 
is required to describe it in a principled way.
The structured contract framework provides a rigorous and principled way to establish a relation between
the contents of pieces of user-defined code, including a small-steps specification, an implementation, 
and a proof that a single valid ledger step corresponds to a valid program step. However, 
the capacity for reasoning within this framework is very limited without additional 
structure. In this work, we define the required structure. 

Our contribution includes a rigorous definition of a valid program trace for a given 
small-step semantics and set of valid initial states. 
Next, we construct a category that we use to model the relation between 
the behavior of the ledger and a program implemented on it. This category 
has simple graphs as objects, and partial sieve-defined homomorphisms as morphisms. We define
maps between valid traces, which are paths in the graph, in terms of maps in $\GraphSieve_{*}$.
We then show how such maps are induced by a structured contract. 
We apply an existing metric defined on arbitrary execution traces (i.e., paths in 
a complete directed state graph) to 
paths in our graph, which correspond to valid traces only. We show that the metric we applied 
is, in fact, an ultrametric. This allows us to demonstrate that 
the maps in our category are non-expanding and continuous, and therefore, for any 
safety property of a program, its ledger implementation has an associated safety property. 

We go on to prove certain important safety properties of the ledger, including commutativity, 
replay protection, and trivial update protection. These properties
all rely on the ability to reason about the prefix of a given trace, which was previously 
not supported in the structured contract framework. 
Assuming certain consequences of these properties
is often required in practice for proving correctness of program implementations on the ledger \cite{msgs}.

We formalized the notion of safety in the UTxO ledger programming context, leaving liveness for future work. 
We intend to use our model to study relationships between structured contracts
on a single ledger,
as well as the possibility of composing them. We are also interested in
investigating unique quirks of smart contracts in the EUTxO model, such as the double satisfaction
problem \cite{msgs}, and formalize what it means for a transaction's
changes to the ledger to be predictable (building on the transaction commutativity
property).
Another interesting direction of research would be to apply similar methods and techniques 
we have defined here to other blockchains that have formal models, such as Algorand \cite{algorandformal}
or Bitcoin \cite{bitcoinformal}.

\nocite{*}
\bibliographystyle{eptcs}
\bibliography{sources}
\end{document}